\newcommand{\vgam}{\boldsymbol{\gamma}}
\newcommand{\va}{\boldsymbol{\alpha}}
\newcommand{\range}{{\bf range}}
\newcommand{\Zr}{\mathbb{Z}}
\newcommand{\JE}{\widehat{\mathcal{J}}}
\newcommand{\vw}{{\bf w}}
\newcommand{\LM}{{\bf L}}
\definecolor{Gray}{gray}{0.9}
\newcommand{\dx}{\boldsymbol{\Delta}{\bf x}}
\newcommand{\dv}{{\bf d}}
\newcommand{\ra}{r}
\newcommand{\D}{{\bf D}}
\newcommand{\bal}{\boldsymbol{\alpha}}
\newcommand{\bv}{\boldsymbol{\beta}}
\newcommand{\DX}{\boldsymbol{\Delta}{\bf X}}
\newcommand{\ones}{{\bf 1}}
\newcommand{\xm}{\overline{\bf x}}
\newcommand{\M}{{\bf M}}
\newcommand{\N}{M}
\newcommand{\Q}{{\bf Q}}
\newcommand{\z}{{\bf z}}
\newcommand{\x}{{\bf x}}
\newcommand{\X}{{\bf X}}
\renewcommand{\Re}{\mathbb{R}}
\newcommand{\leftl}{\left \{}
\newcommand{\rightl}{\right \}}
\newcommand{\leftp}{\left (}
\newcommand{\rightp}{\right )}
\newcommand{\leftb}{\left [}
\newcommand{\rightb}{\right ]}
\newcommand{\lefta}{\left |}
\newcommand{\righta}{\right |}
\newcommand{\BO}[1]{\mathcal{O}\leftp #1 \rightp}
\newcommand{\BE}{\widehat{\bf B}}
\newcommand{\I}{{\bf I}}
\newcommand{\Nor}{\mathcal{N}}
\newcommand{\norm}[1]{\left \| #1 \right \|}
\newcommand{\J}{\mathcal{J}}
\newcommand{\JT}{\widehat{\mathcal{J}}}
\newcommand{\Ho}{\mathcal{H}}
\newcommand{\HG}{\widehat{\mathcal{H}}}
\newcommand{\erra}{\boldsymbol{\gamma}}
\newcommand{\var}{{\bf var}}
\newcommand{\evar}{\widehat{\bf var}}
\newcommand{\y}{{\bf y}}
\newcommand{\B}{{\bf B}}
\newcommand{\R}{{\bf R}}
\newcommand{\zero}{{\bf 0}}
\newcommand{\Mo}{\mathcal{M}}
\newcommand{\errobs}{\boldsymbol{\varepsilon}}
\newcommand{\Nstate}{n}
\newcommand{\Nobs}{m}
\newcommand{\Nens}{N}
\renewcommand{\P}{{\bf P}}
\renewcommand{\H}{{\bf H}}
\newcommand{\LE}{\widehat{\bf L}}
\newcommand{\DE}{\widehat{\bf D}}
\newcommand{\JN}{\widetilde{\J}}
\newtheorem{theorem}{Theorem}[section]
\begin{document}

\title{Adjoint-Free 4D-Var Methods Via Line Search Optimization For Non-Linear Data Assimilation
}




\author[1]{Elias D. Nino-Ruiz%
  \thanks{Electronic address: \texttt{enino@uninorte.edu.co}; Corresponding author}}

\author[2]{J. Diaz-Rodriguez%
  \thanks{Electronic address: \texttt{jdiazrod@yorku.ca}}}

\affil[1]{Applied Math and Computer Science Laboratory\\ 
Department of Computer Science\\
Universidad del Norte, Colombia}

\affil[2]{Department of Mathematics and Statistics\\
York University, Canada}


\maketitle

\begin{abstract}
This paper proposes two practical implementations of Four-Dimensional Variational (4D-Var) Ensemble Kalman Filter (4D-EnKF) methods for non-linear data assimilation. Our formulations' main idea is to avoid the intrinsic need for adjoint models in the context of 4D-Var optimization and, even more, to handle non-linear observation operators during the assimilation of observations. The proposed methods work as follows: snapshots of an ensemble of model realizations are taken at observation times, these snapshots are employed to build control spaces onto which analysis increments can be estimated. Via the linearization of observation operators at observation times, a line-search based optimization method is proposed to estimate optimal analysis increments. The convergence of this method is theoretically proven as long as the dimension of control-spaces equals model one. In the first formulation, control spaces are given by full-rank square root approximations of background error covariance matrices via the Bickel and Levina precision matrix estimator. In this context, we propose an iterative Woodbury matrix formula to perform the optimization steps efficiently. The last formulation can be considered as an extension of the Maximum Likelihood Ensemble Filter to the 4D-Var context. This employs pseudo-square root approximations of prior error covariance matrices to build control spaces. Experimental tests are performed by using the Lorenz 96 model. The results reveal that, in terms of Root-Mean-Square-Error values,  both methods can obtain reasonable estimates of posterior error modes in the 4D-Var optimization problem. Moreover, the accuracies of the proposed filter implementations can be improved as the ensemble sizes are increased. 

\end{abstract}

\section{Introduction}
\label{sec:introduction}

To be concise, the goal of Data Assimilation (DA) is to correct imperfect model trajectories $\leftl \x^b_k \rightl_{k=0}^{\N}$ via real-noisy observations $\leftl \y_k \rightl_{k=0}^{\N}$ \cite{kalnay2003atmospheric,lahoz2010data,reichle2008data}, where $\x^b_k \in \Re^{\Nstate \times 1}$ is well-known as the forecast state, $k$ stands for time index  where observations are collected, for $0 \le k \le \N$, $\Nstate$ denotes the state size, the size of the assimilation window reads $\N$, $\y_k \in \Re^{\Nobs \times 1}$ is a vector of observations, and $\Nobs$ is the number of observations. Forecasts are typically obtained by numerical models which minimizes the behaviour of the ocean and/or the atmosphere: 
\begin{eqnarray}
\label{eq:numerical-model}
\x_{k} = \Mo_{t_{k-1} \rightarrow t_k} \leftp \x_{k-1} \rightp \,\, \text{ for $0 \le k-1 \le \N$, and $\x_k \in \Re^{\Nstate \times 1}$} \,,
\end{eqnarray}
where $\Mo:\Re^{\Nstate \times 1} \rightarrow \Re^{\Nstate \times 1}$ is the non-linear model operator (typically, given in the form of partial and ordinary differential equations). In the context of Four-Dimensional Variational (4D-Var) data assimilation methods \cite{huang2009four},cost functions of the form \cite{tr2006accounting,tremolet2007incremental}:
\begin{eqnarray}
\label{eq:4D-Var-function}
\displaystyle
\J(\x_0) = \frac{1}{2} \cdot \norm{\x_0-\x_0^b}^2_{\B^{-1}_0} +\frac{1}{2} \cdot \sum_{k=0}^{\N-1} \norm{\y_k-\Ho_k \leftp \x_k \rightp}^2_{\R_k^{-1}} \,,
\end{eqnarray}
are considered to estimate the initial state $\x_0 \in \Re^{\Nstate \times 1}$ which best fit a given assimilation window $ \leftl \y_k \rightl_{k=0}^{\N}$. The optimization problem to solve reads:
\begin{eqnarray}
\label{eq:4D-Var-Opt-Problem}
\displaystyle
\x^a_0 = \arg \underset{\x_0}{\min} \, \J \leftp \x_0 \rightp \,, \text{subject to $\x_{k} = \Mo_{t_{k-1} \rightarrow t_k} \leftp \x_{k-1}\rightp$} \,.
\end{eqnarray}
A potential drawback of 4D-Var based methods is the use of adjoint models. In practice, these models can be labor-intensive to develop and computationally expensive to run \cite{ruiz2016derivative}. Thus, ensemble-based methods can be employed under 4D-Var operational scenarios to avoid the use of adjoint models \cite{harlim2007four,miyoshi2012local}. The main idea behind these approaches is to build ensemble sub-spaces onto which analysis increments can be estimated \cite{wang2007comparison}. For non-linear observation operators, the underlying error distribution of background errors can be non-Gaussian and as a direct consequence, no closed-form expression are available for computing posterior modes in \eqref{eq:4D-Var-Opt-Problem} \cite{arulampalam2002,tr2006accounting}. Hence, iterative methods such as Trust Region \cite{more1983computing} and Line Search \cite{more1994line} can be employed to estimate posterior modes of the implicit error distribution in \eqref{eq:4D-Var-function}. We think that there is an opportunity to derive a line-search-based optimization method for the solution of \eqref{eq:4D-Var-Opt-Problem} via control-space approximations (i.e., based on ensembles). In this manner, an ensemble of model realizations can be employed to capture the error dynamics across all observation times and, even more, to build spaces onto which analysis increments can be estimated. Simple ensemble anomalies can span Control-spaces or robust square-root approximations of background-error-covariance matrices via a modified Cholesky decomposition  \cite{bickel2008covariance,rothman2009generalized}. In the last estimator, its special structure can allow matrix-free 4D-EnKF implementations.

This paper's outline is as follows: in Section \ref{sec:preliminaries}, theoretical and practical aspects of variational and ensemble data assimilation methods are briefly discussed in the linear and the non-linear cases. In Section \ref{sec:proposed-method}, two efficient implementations of 4D-EnKF formulations are proposed via line-search optimization, the first one makes use of a modified Cholesky decomposition to build control spaces while the last one uses an ensemble of anomalies to estimate posterior increments. Their computational costs are detailed as well as the convergence of the line-search based optimization method. In Section \ref{sec:experimental-settings}, experimental tests are performed to assess the accuracy of the proposed filter implementations, the Lorenz 96 model is employed as our surrogate model for the experiments. Conclusions of this research are stated in Section \ref{sec:conclusions}.

\section{Preliminaries}
\label{sec:preliminaries}

This Section discusses some related topics for the development of our variational filter.

\section{The Ensemble Kalman Filter}

The Ensemble Kalman Filter (EnKF) \cite{lorenc2003potential,evensen2003ensemble} is a sequential ensemble based method which employes an ensemble of model realizations to estimate moments of prior error distributions \cite{houtekamer1998data,stroud2018bayesian}:
\begin{eqnarray}
\label{eq:background-ensemble}
\X_k^b = \leftb \x^{b[1]}_k,\, \x^{b[2]}_k,\, \ldots,\, \x^{b[\Nens]}_k \rightb \in \Re^{\Nstate \times \Nens}\,,
\end{eqnarray}
where $\x^{b[e]}_k \in \Re^{\Nstate \times 1}$ denotes the $e$-th ensemble member, for $1 \le e \le \Nens$, at time $k$, for $0 \le k \le \N$. The empirical moments of \eqref{eq:background-ensemble} are employed to estimate the forecast state $\x^b_k$:
\begin{eqnarray}
\label{eq:ensemble-mean}
\displaystyle
\xm^b_k = \frac{1}{\Nens} \cdot \sum_{e=1}^{\Nens} \x^{b[e]}_k  \in \Re^{\Nstate \times 1} \,,
\end{eqnarray}
and the backgrund error covariance matrix $\B$:
\begin{eqnarray}
\label{eq:ensemble-covariance}
\displaystyle
\P^b_k = \frac{1}{\Nens-1} \cdot \DX^b_k \cdot \leftb \DX^b_k \rightb^T \in \Re^{\Nstate \times \Nstate} \,,
\end{eqnarray}
where the matrix of member deviations reads:
\begin{eqnarray}
\label{eq:deviations}
\displaystyle
\DX^b_k = \X^b_k-\xm^b_k \cdot \ones^T \in \Re^{\Nstate \times \Nens} \,.
\end{eqnarray}
The posterior ensemble can then be built by using synthetic observations:
\begin{eqnarray}
\label{eq:analysis-ensemble-k}
\X^a_k = \X^b_k + \DX^a_k \,,
\end{eqnarray}
where the analysis updates can be obtained by solving the following linear system:
\begin{eqnarray}
\leftb \leftb \P^b_k \rightb^{-1} + \H_k^T \cdot \R_k^{-1} \cdot \H_k \rightb \cdot \DX^a_k = \H_k^T \cdot \R_k^{-1} \cdot \D^s_k \in \Re^{\Nstate \times \Nens} \,,
\end{eqnarray}
the innovation matrix reads $\D_k^s \in \Re^{\Nobs \times \Nens}$ whose $e$-th column $\y_k-\H_k \cdot \x^{b[e]}_k + \errobs_k^{[e]} \in \Re^{\Nobs \times 1}$ is a synthetic observation with $\errobs_k^{[e]} \sim \Nor \leftp \zero_{\Nobs},\, \R_k \rightp$. In operational DA scenarios, the ensemble size can be lesser than model dimensions by order of magnitudes and as a direct consequence, sampling errors impact the quality of analysis increments. To counteract the effects of sampling noise, localizations methods are commonly employed \cite{greybush2011balance,chen2010cross}. In the context of covariance tappering, the use of the \textit{spatial-predecessors} concept can be employed to obtain sparse estimators of precision matrices \cite{levina2008sparse}. The predecessors of model component $i$, from now on $P(i,\,\ra)$, for $1 \le i \le \Nstate$ and a radius of influence $\ra \in \Zr^{+}$, are given by the set of components whose labels are lesser than that of the $i$-th one. 

In the EnKF based on a modified Cholesky decomposition (EnKF-MC) \cite{nino2017parallel,nino2018ensemble} the following estimator is employed to approximate the precision covariance matrix of the background error distribution \cite{bickel2008regularized}:
\begin{eqnarray}
\label{eq:mc-ednr}
\BE^{-1}_k = \LE_k^T \cdot \DE^{-1}_k \cdot \LE_k \in \Re^{\Nstate \times \Nstate} \,,
\end{eqnarray}
where the Cholesky factor $\LM_k \in \Re^{\Nstate \times \Nstate}$ is a lower triangular matrix,
\begin{eqnarray}
\label{eq:L-factor}
\leftl \LE_k \rightl_{i,v} = \begin{cases}
-\beta_{i,v,k} & \,,\, v \in P(i,\ra) \\
1 & \,,\, i=v \\
0 &\,,\, otherwise
\end{cases} \,,
\end{eqnarray}
whose non-zero sub-diagonal elements $\beta_{i,v,k}$ are obtained by fitting models of the form,
\begin{eqnarray}
\label{eq:fitting-models-of-the-form}
\displaystyle
{\x_{[i]}^T}_k = \sum_{v \in P(i,\,\ra)} \beta_{i,v,k} \cdot {\x_{[v]}^T}_k + {\erra_i}_k \in \Re^{\Nens \times 1} \,, \text{ $1 \le i \le \Nstate$}\,,
\end{eqnarray}
where ${\x_{[i]}^T}_k \in \Re^{\Nens \times 1}$ denotes the $i$-th row (model component) of the ensemble \eqref{eq:background-ensemble}, components of vector ${\erra_i}_k \in \Re^{\Nens \times 1}$ are samples from a zero-mean Normal distribution with unknown variance $\sigma^2_k$, and $\D_k \in \Re^{\Nstate \times \Nstate}$ is a diagonal matrix whose diagonal elements read,
\begin{eqnarray}
\label{eq:diagonal-elements-of-D}
\leftl \D_k \rightl_{i,i} &=& \evar \leftp {\x_{[i]}^T}_k -\sum_{v \in P(i,\,\ra)} \beta_{i,v,k} \cdot {\x_{[j]}^T}_k  \rightp^{-1} \\
& \approx & \var \leftp {\erra_i}_k \rightp^{-1} = \frac{1}{\sigma_k^2} >0 \,, \text{ with $\leftl \D_k \rightl_{1,1} = \evar \leftp {\x_{[1]}^T}_k \rightp^{-1}$}\,,
\end{eqnarray}
where $\var(\bullet)$ and $\evar(\bullet)$ denote the actual and the empirical variances, respectively. The analysis equations can then be written as discussed in \cite{nino2018ensemble}.

\subsection{Maximum Likelihood Ensemble Filter}
\label{subsec:MLEF}

The filter derivation in the EnKF context is based on the assumptions that prior and observational errors are Gaussian. However, for non-linear observation operators (i.e., information brought by satellite radiances), observational errors are not Gaussian anymore and as a direct consequence, EnKF based methods can fail to obtain reasonable estimates of posterior error distributions. The Maximum Likelihood Ensemble Filter (MLEF) \cite{zupanski2005} is a numerical method that minimizes the Three-Dimensional Variational (3D-Var) cost function:
\begin{eqnarray}
\label{eq:3D-Var-function}
\displaystyle
\J(\x_k) = \frac{1}{2} \cdot \norm{\x_k-\xm_k^b}^2_{\B^{-1}_k} +\frac{1}{2} \cdot \norm{\y_k-\Ho_k \leftp \x_k \rightp}^2_{\R_k^{-1}} \,,
\end{eqnarray}
onto the ensemble space. For this purpose, analysis increments are constrained to the space spanned by the ensemble of anomalies \eqref{eq:deviations}, this is:
\begin{eqnarray}
\label{eq:3D-Var-ensemble-space}
\displaystyle
\x_k = \xm^b_k + \DX_k \cdot \vw \,,
\end{eqnarray}
which results in a cost function of the form:
\begin{eqnarray}
 \displaystyle \nonumber
\J(\xm^b_k + \DX_k \cdot \vw) &=& \frac{\Nens-1}{2} \cdot \norm{\vw}^2\\ \label{eq:3D-Var-function-ensemble-space}
&+&\frac{1}{2} \cdot \norm{\y_k-\Ho_k \leftp \xm^b_k + \DX_k \cdot \vw  \rightp}^2_{\R_k^{-1}} \,,
\end{eqnarray}
and corresponding optimization problem:
\begin{eqnarray}
\label{eq:3D-Var-optimization}
\displaystyle
\vw^{*} = \arg \underset{\vw}{\min} \, \J(\xm^b_k + \DX_k \cdot \vw)\,.
\end{eqnarray}
Note that, since all computations are performed onto the ensemble space, the computational effort of the MLEF becomes:
\begin{eqnarray}
\label{eq:MLEF-computational-cost}
\displaystyle
\BO{\Nens^{3} + \Nstate \cdot \Nobs \cdot \Nens} \,,
\end{eqnarray}
once all analysis increments are mapped back onto the model space (and the posterior ensemble is built about the estimated posterior mode), which is linear with regard to the model dimension.

\subsection{Four Dimensional Variational Ensemble Kalman Filter }

In Four Dimensional Variational Ensemble Kalman Filter (4D-EnKF) methods \cite{liu2008ensemble,lorenc2015comparison}, the model trajectory in \eqref{eq:4D-Var-Opt-Problem}  is constrained to the space spanned by the background ensemble members, this is:
\begin{eqnarray}
\label{eq:ensemble-space}
\displaystyle
\x_k = \xm^b_k + \DX_k \cdot \vw \,,
\end{eqnarray}
where $\vw \in \Re^{\Nens \times 1}$ is a vector in redundant coordinates to be determined later. Making the assumption that all sources of errors are Gaussian (i.e., the observation operator is linear), and by replacing \eqref{eq:ensemble-space} into the equation \eqref{eq:4D-Var-function} one obtains:
\begin{eqnarray}
\nonumber
\J(\x_0) &=& \J \leftp \xm^b_0 + \DX_0\cdot \vw \rightp \\ \label{eq:4D-Var-ensemble-space}
 &=& \JE(\vw) = \frac{(\Nens-1)}{2} \cdot \norm{\vw}^2 + \frac{1}{2} \cdot \sum_{k=0}^{\N} \norm{\dv_k - \Q_k \cdot \vw}^2_{\R_k^{-1}} \,,
\end{eqnarray}
where $\dv_k = \y_k - \H_k \cdot \xm^b_k \in \Re^{\Nobs \times 1}$ is the innovation vector and $\Q_k = \H_k \cdot \DX_k^b \in \Re^{\Nobs \times \Nens}$. The optimal values of the control variable $\vw$ is then sought in order to estimate the initial analysis state:
\begin{eqnarray}
\label{eq:new-cost-function}
\displaystyle
\vw^{*} = \arg\,\underset{\vw}{\min} \, \JE(\vw) \,.
\end{eqnarray}
The gradient of \eqref{eq:4D-Var-ensemble-space} equals:
\begin{eqnarray}
\displaystyle \nonumber
\nabla_{\vw} \JE(\vw) &=& (\Nens-1) \cdot \vw - \sum_{k=0}^{\N} \Q_k^T \cdot \R_k^{-1} \cdot \leftb \dv_k - \Q_k \cdot \vw \rightb \\ \nonumber
&=& \leftb (\Nens-1) \cdot \I + \sum_{k=0}^{\N} \Q_k^T \cdot \R_k^{-1} \cdot \Q_k \rightb \cdot \vw \\ \label{eq:gradient-4D-Var-ensemble-space}
&-& \sum_{k=0}^{\N} \Q_k^T \cdot \R_k^{-1} \cdot \dv_k \in \Re^{\Nens \times 1} \,,
\end{eqnarray}
and by setting this gradient to zero, the optimal weights \eqref{eq:new-cost-function} reads:
\begin{eqnarray}
\label{eq:optimal-weights}
\displaystyle
\vw^{*} = \leftb (\Nens-1) \cdot \I + \sum_{k=0}^{\N} \Q_k^T \cdot \R_k^{-1} \cdot \Q_k \rightb^{-1} \cdot \sum_{k=0}^{\N} \Q_k^T \cdot \R_k^{-1} \cdot \dv_k \in \Re^{\Nens \times 1}  \,,
\end{eqnarray}
from which the initial analysis state can be estimated:
\begin{eqnarray}
\label{eq:initial-analysis-estimation}
\displaystyle
\xm_0^a = \xm_0^b + \DX_0^b \cdot \vw^{*} \,.
\end{eqnarray}
Similar to the MLEF method, the computational effort of 4D-EnKF based methods are linearly bounded by the model dimension; it can be easily seen that all computations are performed onto the ensemble space. Note that, in this formulation, posterior estimates can be highly biased regarding the actual state when the observation operator is non-linear. 

\subsection{Numerical Optimization}

In numerical optimization \cite{vanderplaats1984numerical,wright1999numerical}, commonly, problems of the form:
\begin{eqnarray*}
\x = \arg\,\underset{\x}{\min} \, f(\x) \,,
\end{eqnarray*}
where $f(\x):\Re^{n \times 1} \rightarrow \Re$ is a smooth function, are solved by iterative schemes such as
\begin{eqnarray}
\x^{(u+1)} = \x^{(u)} + \dx^{(u)} \,,
\end{eqnarray}
wherein $\dx^{(u)}$ is a search direction, for instance, the steepest descent direction \cite{savard1994steepest,hager2006survey,fletcher1964function,lewis2000direct}
%
the Newton's step \cite{battiti1992first,grippo1989truncated,pan1999newton},
%
or quasi-Newton based method \cite{shanno1970conditioning,nocedal1980updating,loke1996rapid}.
%
A concise survey of Newton based methods can be consulted in \cite{knoll2004jacobian}. Since step sizes in steepest descent directions can be too large, line search \cite{grippo1986nonmonotone,uschmajew2014line,hosseini2018line} or trust region \cite{conn2000trust,more1983computing,curtis2017trust} methods can be successfully used in order to optimally choose the step size and therefore, to ensure global convergence of iterates to stationary points. This holds as long as some assumptions over functions, gradients, and (potentially) Hessians are preserved \cite{shi2004convergence}. In the context of line search, the following assumptions are commonly done:
\begin{enumerate}
\item \label{cond:cond1} The function $f(\x)$ has lower bound on $\Omega_0 = \{\x \in \Re^{n \times 1},\, f(\x) \le f(\x_0)\}$, where $\x_0 \in \Re^{n \times 1}$ is available. 
\item \label{cond:cond2} The gradient $\nabla f(\x)$ is Lipschitz continuous in an open convex set $B$ which contains $\Omega_0$,
\begin{eqnarray*}
\norm{\nabla f(\x)-\nabla f(\z)} \le L \cdot \norm{\x-\z},\, \text{ for $\x,\,\z \in B$,  and $L > 0$}.
\end{eqnarray*}
These together with iterates of the form,
\begin{eqnarray}
\label{eq:iterates-line-search}
\displaystyle
\x^{(u+1)} = \x^{(u)} + \alpha \cdot \dx^{(u)} \,,
\end{eqnarray}
ensure global convergence \cite{zhou2017matrix} as long as $\alpha$ is chosen as an (approximated) minimizer of
\begin{eqnarray}
\label{eq:line-search-cost-function}
\displaystyle
\alpha^{*} = \arg\,\underset{\alpha \ge 0}{\min} \, f \leftp  \x^{(u)} + \alpha \cdot \dx^{(u)}\rightp \,.
\end{eqnarray}
\end{enumerate}

\section{Proposed Method}
\label{sec:proposed-method}

In this section, we develop an efficient and practical implementation of a 4D-EnKF method via a modified Cholesky decomposition for non-linear observation operator.

\subsection{Filter Derivation}

Our main goal is to find an initial analysis ensemble $\X^a_0 \in \Re^{\Nstate \times \Nens}$ whose model trajectories account for the information brought by observations within a given assimilation window. To accomplish this, we proceed as follows: given an assimilation window of $\N$ observations (with no loss of generality, evenly distributed in time), by using the numerical model \eqref{eq:numerical-model}, the initial background ensemble $\X_0^b$ is propagated in time and snapshots of its time evolution are taken at observation times. Denote  by $k$, the $k$-th step where an observation is available, for $0 \le k \le \N$, at each observation time, the background ensemble \eqref{eq:background-ensemble} can be employed to estimate a full-rank square-root approximation of the precision covariance matrix $\B_k^{-1}$ via a modified Cholesky decomposition \eqref{eq:mc-ednr}:
\begin{eqnarray}
\label{eq:square-root-app}
\displaystyle
\BE^{-1/2}_k = \LE^T_k \cdot \DE^{-1/2}_k \in \Re^{\Nstate \times \Nstate} \,.
\end{eqnarray}
This square-root estimation can then be employed to build a control space onto which analysis increments can be estimated:
\begin{eqnarray*}
\x_k-\xm^b_k \in \range \leftl \BE^{1/2}_k \rightl \,,
\end{eqnarray*}
or equivalently:
\begin{eqnarray}
\label{eq:redundant-coordinates}
\displaystyle
\x_k = \xm^b_k + \BE^{1/2}_k \cdot \va \in \Re^{\Nstate \times 1} \,,
\end{eqnarray}
where $\va \in \Re^{\Nstate \times 1}$ is a vector in redundant coordinates to be determined later. Since the background error covariance matrix onto the control space \eqref{eq:redundant-coordinates} is nothing but the identity matrix, the following error statistics hold for the prior weights $\va^b$:
\begin{eqnarray*}
\displaystyle
\va^b \sim \Nor \leftp \zero,\, \I \rightp \,.
\end{eqnarray*}
Due to this, the 4D-Var cost function \eqref{eq:4D-Var-function} onto the space \eqref{eq:redundant-coordinates} can be written as follows:
\begin{eqnarray} \nonumber
\J \leftp \x_0 \rightp &=& \J \leftp \xm^b_0 + \BE^{1/2}_0 \cdot \va \rightp = \frac{1}{2} \cdot \norm{\va}^2 \\ \label{eq:4D-Var-MC}
&+& \frac{1}{2} \cdot \sum_{k=0}^\N \norm{\y_k - \Ho_k \leftp \xm^b_k + \BE^{1/2}_k \cdot \va \rightp }_{\R_k^{-1}}^2 \,,
\end{eqnarray}
The adjoint-free optimization problem to solve reads:
\begin{eqnarray}
\label{eq:optimization-problem-MC}
\displaystyle
\va^{a} = \arg \, \underset{\va}{\min} \, \J \leftp \xm^b_0 + \BE^{1/2}_0 \cdot \va \rightp\,.
\end{eqnarray}
The next section develops a Line-Search method for solving the optimization problem \eqref{eq:optimization-problem-MC}.

\subsection{An Adjoint-Free 4D-Var Method Via Modified Cholesky Decomposition (4D-Var-MC)}
\label{subsec:4dvarmc}
%

To approximate a solution of \eqref{eq:optimization-problem-MC}, we propose iterates of the form:
\begin{eqnarray}
\label{eq:sub-space-ensemble-iterative}
\displaystyle
\x^{(u+1)}_k = \x^{(u)}_k + \BE^{1/2}_k \cdot \bal_u\,, \text{ for $0 \le u \le U$}\,,
\end{eqnarray}
where $u$ denotes iteration step, $U$ stands for the number of iterations, and $\bal_u \in \Re^{\Nstate \times 1}$ is a vector in redundant coordinates to be determinated later. The background trajectory is set as the initial solution $\x^{(0)}_k = \xm^b_k$, for $0 \le k \le \N-1$, and therefore $\va_0 = \zero$. Note that, at any iteration $u$, 
\begin{eqnarray}
\displaystyle \nonumber
\x_k^{(u+1)} &=& \x^{(0)}_k+  \sum_{v=0}^{u} \BE^{1/2}_k \cdot  \bal_v = \xm^b_k + \sum_{v=0}^{u-1} \BE^{1/2}_k \cdot  \bal_v + \BE^{1/2}_k \cdot  \bal_u\, \\ \label{eq:to-replace}
&=&  \xm^b_k + \BE^{1/2}_k  \cdot \leftb \sum_{v=0}^{u-1}  \bal_v +\bal_u \rightb =  \xm^b_k + \BE^{1/2}_k  \cdot \leftb \bv_u +\bal_u \rightb  \,
\end{eqnarray}
where we have let $\bv_u = \sum_{v=0}^{u-1} \bal_v$. By replacing \eqref{eq:sub-space-ensemble-iterative} and \eqref{eq:to-replace} in \eqref{eq:4D-Var-function} one obtains:
\begin{eqnarray*}
\label{eq:4D-alpha}
\displaystyle
\JT \leftp \va_u\rightp = \J \leftp \x^{(u)}_k + \BE^{1/2}_k \cdot \bal_u \rightp &=& \frac{1}{2} \cdot \norm{\bv_u+\bal_u}^2 \\
 &+& \frac{1}{2} \cdot \sum_{k=0}^{\N-1} \norm{\y_k-\Ho_k \leftp \x^{(u)}_k + \BE^{1/2}_k \cdot \bal_u\rightp}^2_{\R_k^{-1}} \,.
\end{eqnarray*}
At iteration $u$, we choose $\va_u$ in \eqref{eq:sub-space-ensemble-iterative} as the value that solves the following optimization problem:
\begin{eqnarray}
\label{eq:problem-to-solve-u}
\displaystyle
\va_u^{*} = \arg\,\underset{\va_u}{\min} \, \J \leftp \x^{(u)}_k + \BE^{1/2}_k \cdot \bal_u \rightp \,.
\end{eqnarray}
A solution of this problem can be approximated as follows: consider the linearization of $\Ho_k$ about $\x^{(u)}_k$, at iteration $u$ 
\begin{eqnarray*}
\label{eq:linearization}
\displaystyle
\Ho_k \leftp \x_k \rightp  \approx \HG^{(u)}_k \leftp\x_k \rightp &=& \Ho_k \leftp \x^{(u)}_k  \rightp + \H^{(u)}_k \cdot \leftb \x_k-\x^{(u)}_k  \rightb\\
&=& \Ho_k \leftp \x^{(u)}_k  \rightp + \H^{(u)}_k \cdot \BE^{1/2}_k\cdot\bal_u \,,
\end{eqnarray*}
where $\H^{(u)}_k  \in \Re^{\Nobs \times \Nstate}$ is the Jacobian of $\Ho_k(\x_k)$ at $\x^{(u)}_k$, 
then, at iteration $u$, we make use of the next quadratic approximation:
\begin{eqnarray*}
\label{eq:4D-alpha-linear}
\displaystyle
\JT \leftp \va_u\rightp & \approx & \JN(\va_u) \\
&=& \frac{1}{2} \cdot \norm{\bv_u+\bal_u}^2 + \frac{1}{2} \cdot \sum_{k=0}^{\N-1} \norm{\y_k-\HG^{(u)}_k \leftp \x^{(u)}_k + \BE^{1/2}_k \cdot \bal_u \rightp}^2_{\R_k^{-1}} \\
&=& \frac{1}{2} \cdot \norm{\bv_u+\bal_u}^2 + \frac{1}{2} \cdot \sum_{k=0}^{\N-1} \norm{\dv^{(u)}_k-\Q_k^{(u)} \cdot \bal_u}^2_{\R_k^{-1}} 
\end{eqnarray*}
where $\dv^{(u)}_k = \y_k-\Ho_k \leftp \x_k^{(u)} \rightp \in \Re^{\Nobs \times 1}$, and $ \Q_k^{(u)}= \H^{(u)}_k \cdot \BE^{1/2}_k \in \Re^{\Nobs \times \Nstate}$. Thus,
\begin{eqnarray*}
\label{eq:JN}
\displaystyle
\nabla \JN_{\bal_u}(\va_u) &=& \bv_u+\bal_u- \sum_{k=0}^{\N} \leftb \Q_k^{(u)}  \rightb^T \cdot \R^{-1}_k \cdot \leftb \dv^{(u)}_k- \Q_k^{(u)}  \cdot \bal_u\rightb \\
&=& \leftb \I + \sum_{k=0}^{\N} \leftb \Q_k^{(u)}  \rightb^T \cdot \R^{-1}_k \cdot  \Q_k^{(u)}\rightb \cdot \bal_u\\
&-& \leftb -\bv_u+\sum_{k=0}^{\N}\leftb \Q_k^{(u)}  \rightb^T  \cdot \R^{-1}_k \cdot \dv^{(u)}_k\rightb \in \Re^{\Nens \times 1} \,,
\end{eqnarray*}
and 
\begin{eqnarray*}
\displaystyle
\nabla \JN_{\bal_u,\bal_u}(\bal_u) &=&  \I + \sum_{k=0}^{\N} \leftb \Q_k^{(u)}  \rightb^T \cdot \R^{-1}_k \cdot \Q_k^{(u)} \,.
\end{eqnarray*}
We consider the search direction as the solution of $\nabla \JN_{\bal_u}(\hat{\bal}_u )={\boldsymbol 0}$, for computing partial analysis increments onto the control space, this is 
\begin{eqnarray*}
 \hat{\bal}_u = \leftb  \nabla \JN_{\bal_u,\bal_u}(\hat{\bal}_u) \rightb^{-1} \cdot \leftb -\bv_u+\sum_{k=0}^{\N}\leftb \Q_k^{(u)}  \rightb^T  \cdot \R^{-1}_k \cdot \dv^{(u)}_k\rightb \,, 
\end{eqnarray*}
but, since we employ a linear approximation of the observation operator to approximate a descent direction of \eqref{eq:4D-Var-Opt-Problem} , we perform a line search to estimate the partial analysis increments:
\begin{eqnarray}
\label{eq:line-search}
\displaystyle
\rho^{*}_u = \arg\,\underset{\rho_u \in [0,\,1]}{\min}\, \J \leftp \x^{(u)}_k + \BE^{1/2}_k \cdot \leftb \rho_u \cdot \hat{\bal}_u \rightb  \rightp \,,
\end{eqnarray}
and therefore, the next iterates reads:
\begin{eqnarray}
\x^{(u+1)}_k = \x^{(u)}_k + \B^{1/2}_k \cdot \leftb \rho^{*}_u \cdot  \hat{\bal}_u \rightb
\end{eqnarray}
where $\va_u^{*} \approx \rho^{*}_u \cdot  \hat{\bal}_u$, this yields to $\bv_u^{*} \approx \sum_{v=1}^{u-1} \,\rho^{*}_v \cdot  \hat{\bal}_v $. This process is repeated until some stopping criteria is satisfied (i.e., a maximum number of iterations $U$ is reached). The resulting analysis state reads:
\begin{eqnarray*}
\displaystyle
\xm_0^a = \xm_0^b + \B^{1/2}_0 \cdot \sum_{u=1}^{U}\, \rho^{*}_u \cdot  \hat{\bal}_u \,,
\end{eqnarray*}
which is nothing but a linear combination of the approximated search directions among iterations. Thus, we can approximate $\va^a$ in \eqref{eq:optimization-problem-MC} as follows:
\begin{eqnarray*}
\displaystyle
\va^a \approx \sum_{u=1}^{U} \,\rho^{*}_u \cdot \hat{\bal}_u \,.
\end{eqnarray*}

 We can employ any line-search rule to solve the optimization problem \eqref{eq:line-search}. Putting it all together, the optimization steps are detailed in the Algorithm \ref{alg:4dvarmcd}. The posterior ensemble onto the control space can then be built by using a square root approximation of the information matrix, its can be easily shown that the posterior error statistics read:
\begin{eqnarray}
\label{eq:posterior-weights}
\displaystyle
\va^{a[e]} \sim \Nor \leftp \va^{a},\, \leftb \I + \sum_{k=0}^{\N} \leftb  \Q_k^{(U)} \rightb^T \cdot \R_k^{-1} \cdot \Q^{(U)}_k   \rightb^{-1} \rightp \,, \text{ for $1 \le e \le \Nens$} \,,
\end{eqnarray}
with corresponding analysis members in the model space:
\begin{eqnarray*}
\x^{a[e]}_0 = \xm^{a}_0 + \BE_0^{1/2} \cdot \va^{a[e]} \,.
\end{eqnarray*}
The initial ensemble members are then propagated in time from which the optimal (ensemble) trajectory is recovered.
\begin{algorithm}
\begin{algorithmic}[1]
\Function{find\_initial\_analysis\_ensemble}{$\left\{\y_k,\xm_k^b,\BE^{1/2}_k,\R^{-1}_k\right\}^\M_{k=0}$} 
\State $\va^a \gets {\boldsymbol 0}$
\For{$k \gets 1 \to M$}
\State $\x^{(0)}_k \gets \xm_k^b$
\EndFor
\For{$u \gets 0 \to U$}
\State $\bv^{*}_u \gets \va^a$
\For{$k \gets 1 \to \N$}
\State $\dv^{(u)}_k \gets \y_k-\Ho_k \leftp \x_k^{(u)}\rightp$
\State $\Q^{(u)}_k \gets \H^{(u)}_k \cdot \BE^{1/2}_k$ \Comment{$\H^{(u)}_k$ is the Jacobian of $\Ho_k(\x_k)$ at $\x^{(u)}_k$}
\EndFor
\State $\JN_{\bal_u,\bal_u}(\hat{\bal}_u) \gets \I + \sum_{k=0}^{\N} \leftb \Q_k^{(u)}  \rightb^T \cdot \R^{-1}_k \cdot \Q_k^{(u)}$
\State $\hat{\bal}_u \gets \leftb  \nabla \JN_{\bal_u,\bal_u}(\hat{\bal}_u) \rightb^{-1} \cdot \leftb -\bv_u+\sum_{k=0}^{\N}\leftb \Q_k^{(u)}  \rightb^T  \cdot \R^{-1}_k \cdot \dv^{(u)}_k\rightb$
\State $\rho^{*}_u \gets \arg\,\underset{\rho_u \in [0,\,1]}{\min}\, \J \leftp \x^{(u)}_k + \BE^{1/2}_k \cdot \leftb \rho_u \cdot \hat{\bal}_u \rightb  \rightp$
\For{$k \gets 1 \to \N$}
\State $\x^{(u+1)}_k \gets \x^{(u)}_k + \B^{1/2}_k \cdot \leftb \rho^{*}_u \cdot  \hat{\bal}_u \rightb$
\EndFor
\State $\va_u^{*} \gets \rho^{*}_u \hat{\bal}_u$
\State $\va^a \gets \va^a+\va_u^{*}$
\EndFor
\State $\xm_0^a \gets \x^{(U)}_0$
\For{$e \gets 1 \to \Nens$} 
\State $\va^{a[e]} \sim \Nor \leftp \va^{a},\, \leftb \I + \sum_{k=0}^{\N} \leftb  \Q_k^{(U)} \rightb^T \cdot \R_k^{-1} \cdot \Q^{(U)}_k   \rightb^{-1} \rightp$
\State $\x^{a[e]}_0 \gets \xm^{a}_0 + \BE_0^{1/2} \cdot \va^{a[e]}$
\EndFor

\State \Return $\{\x^{a[e]}_0\}^\Nens_{e=1}$
\EndFunction
\end{algorithmic}
\caption{An Adjoint-Free 4D-Var Method Via Modified Cholesky Decomposition (4D-Var-MC)}
\label{alg:4dvarmcd}
\end{algorithm}
Matrix inversions in eht 4D-Var-MC can be avoided by considering at each optimization step the matrix-free 4D-Var formulation \cite{nino2020adjoint}.

\subsection{An Adjoint-Free 4D-Var Method Via MLEF (4D-Var-MLEF)} 

The proposed 4D-Var-MC can be similarly derived onto the ensemble space by employing the matrix of anomalies \eqref{eq:deviations} as our set of basis vectors at observation times. We then can propose iterates of the form:
\begin{eqnarray}
\label{eq:sub-space-mlef-iterative}
\displaystyle
\x^{(u+1)}_k = \x^{(u)}_k + \DX_k \cdot \vgam_u\,, \text{ for $0 \le u \le U$}\,,
\end{eqnarray}
where $\vgam_u \in \Re^{\Nens \times 1}$. We perform analogous derivations as in \eqref{eq:to-replace}--\eqref{eq:4D-alpha-linear}, replacing the control space $\BE^{1/2}_k$ by the ensemble space $\DX_k$. Search directions are then approximated by considering linearizations of the form:
\begin{eqnarray*}
\label{eq:linearization-mlef}
\displaystyle
\Ho_k \leftp \x_k \rightp  \approx \HG^{(u)}_k \leftp\x_k \rightp &=& \Ho_k \leftp \x^{(u)}_k  \rightp + \H^{(u)}_k \cdot \DX_k\cdot\bal_u \,,
\end{eqnarray*}
and similar to \eqref{eq:line-search}, a line-search method can be employed to control step lengths among iterations. This algorithm becomes an adjoint-free method based on the MLEF method discussed in section \ref{subsec:MLEF}; we name it the 4D-Var-MLEF. Note that, since all computations are performed onto the ensemble space, the computational cost of estimating analysis increments together with analysis updates reads:
\begin{eqnarray*}
\displaystyle
\BO{\Nens^3 + \Nens^2 \cdot \Nobs + \Nens \cdot \Nobs \cdot \Nstate } \,.
\end{eqnarray*}
As a drawback, the 4D-Var-MLEF suffers from the MLEF limitations regarding global convergence. 

\subsection{Global Convergence of the 4D-Var-MC}

For proving the convergence of the line-search method during the assimilation of observations in the 4D-Var-MC, we consider the model \eqref{cond:cond1}, \eqref{cond:cond2}, and we assume 
\begin{eqnarray}
\label{eq:theorem-ortho}
\displaystyle
\nabla \J \leftp \x^{(u)}\rightp^T  \cdot \leftb \BE^{1/2}_0 \cdot \va^{*}_u \rightb < 0,\, \text{ for $1 \le k \le K$} \,.
\end{eqnarray}
The next Theorem states the necessary conditions to ensure global convergence for the line search method during the assimilation of observations in the 4D-Var-MC method.

\begin{theorem}
\label{theo:TS-MGA}
If \eqref{cond:cond1}, \eqref{cond:cond2}, and \eqref{eq:theorem-ortho} hold, 
then the line search  in the 4D-Var-MC generates an infinite sequence $\leftl \x^{(u)}\rightl_{u=0}^{\infty}$, then 
\begin{eqnarray}
\label{eq:to-prove}
\displaystyle
\underset{u \rightarrow \infty}{\lim} \leftb \frac{ - \nabla \J \leftp \x^{(u)} \rightp^T \cdot  \BE^{1/2}_0 \cdot \va^{*}_u }{ \norm{\BE^{1/2}_0 \cdot \va^{*}_u}} \rightb^2 = 0
\end{eqnarray}
holds.
\end{theorem}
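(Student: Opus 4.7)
The argument is a Zoutendijk-style global convergence result, adapted to the fact that the line search in 4D-Var-MC is truncated to $[0,1]$.

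First, I fix notation by letting $\dv^{(u)} := \BE^{1/2}_0 \cdot \va^{*}_u$ denote the search direction at iteration $u$, so the $k=0$ iterate reads $\x^{(u+1)} = \x^{(u)} + \rho^{*}_u \cdot \dv^{(u)}$ with $\rho^{*}_u \in [0,1]$ returned by \eqref{eq:line-search}. Hypothesis \eqref{eq:theorem-ortho} says each $\dv^{(u)}$ is a descent direction for $\J$ at $\x^{(u)}$, and condition \eqref{cond:cond1} confines the trajectory to $\Omega_0 \subseteq B$, where the Lipschitz estimate \eqref{cond:cond2} is in force. These are exactly the standard assumptions under which Zoutendijk's theorem is proved.

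The heart of the argument is a sufficient-decrease estimate. Combining \eqref{cond:cond2} with the fundamental theorem of calculus along the segment from $\x^{(u)}$ to $\x^{(u)} + \rho \cdot \dv^{(u)}$ yields the quadratic upper bound
\begin{equation*}
\J\!\leftp\x^{(u)} + \rho \cdot \dv^{(u)}\rightp \le \J\!\leftp\x^{(u)}\rightp + \rho \cdot \nabla \J\!\leftp\x^{(u)}\rightp^T \dv^{(u)} + \tfrac{L}{2} \cdot \rho^2 \cdot \norm{\dv^{(u)}}^2,
\end{equation*}
valid for every $\rho \ge 0$ for which the segment remains in $B$. Minimizing the right-hand side over $\rho \in [0,\infty)$ gives a trial step $\hat\rho_u := -\nabla \J(\x^{(u)})^T \dv^{(u)} / (L \cdot \norm{\dv^{(u)}}^2) > 0$. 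Because \eqref{eq:line-search} is an exact minimizer on $[0,1]$, the realized value $\J(\x^{(u+1)})$ is at most the RHS evaluated at $\min\{\hat\rho_u,1\}$, and a short two-case split — the interior case $\hat\rho_u \le 1$ yielding decrease $\tfrac{1}{2L} \cdot (\nabla \J^T \dv^{(u)})^2 / \norm{\dv^{(u)}}^2$ directly, and the boundary case $\hat\rho_u > 1$ yielding a strictly larger decrease because $-\nabla \J^T \dv^{(u)} > L \cdot \norm{\dv^{(u)}}^2$ — produces the uniform bound
\begin{equation*}
\J\!\leftp\x^{(u)}\rightp - \J\!\leftp\x^{(u+1)}\rightp \ge \frac{1}{2L} \cdot \frac{\leftp \nabla \J(\x^{(u)})^T \dv^{(u)} \rightp^2}{\norm{\dv^{(u)}}^2}.
\end{equation*}
Telescoping this over $u$ and using the lower bound of $\J$ on $\Omega_0$ from \eqref{cond:cond1} gives $\sum_{u=0}^{\infty} (\nabla \J(\x^{(u)})^T \dv^{(u)})^2 / \norm{\dv^{(u)}}^2 \le 2L \cdot (\J(\x^{(0)}) - \inf_{\Omega_0} \J) < \infty$, so the summand must tend to zero, which is precisely \eqref{eq:to-prove}.

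The main obstacle is the truncation of the line search to $[0,1]$ rather than $[0,\infty)$: the classical Zoutendijk argument presumes the trial step $\hat\rho_u$ is attainable, whereas in 4D-Var-MC it may exceed $1$ at iterations where the cost is very flat along $\dv^{(u)}$. The two-case split described above is the fix, and it is where assumption \eqref{eq:theorem-ortho} earns its keep, since it guarantees that the endpoint $\rho=1$ still produces enough decrease. A secondary subtlety worth pointing out is that $\va^{*}_u$ arises from the linearized Gauss–Newton-type model $\JN$ rather than from $\J$ itself, but only descent on the true cost — supplied directly by \eqref{eq:theorem-ortho} — is used in the argument, so no additional curvature hypothesis on the linearized Hessian $\nabla \JN_{\bal_u,\bal_u}$ is required.
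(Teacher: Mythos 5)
Your argument follows essentially the same route as the paper's proof: a quadratic upper bound on $\J$ along the search direction derived from the Lipschitz condition \eqref{cond:cond2} via the mean-value/fundamental-theorem identity, a sufficient-decrease estimate extracted from the exact line search, and a telescoping argument against the lower bound \eqref{cond:cond1} to force the summand in \eqref{eq:to-prove} to zero. The only place you genuinely depart from the paper is the two-case split handling the truncation of \eqref{eq:line-search} to $[0,1]$; the paper simply \emph{sets} $\rho^{*}_u$ equal to the unconstrained minimizer $-\nabla\J(\x^{(u)})^{T}\dv^{(u)}/(L\norm{\dv^{(u)}}^{2})$ of the quadratic bound without checking that it lies in $[0,1]$, so on this point you are being more careful than the source.

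However, your boundary case is wrong as stated. Write $a_u=-\nabla\J(\x^{(u)})^{T}\dv^{(u)}>0$ and $b_u=L\norm{\dv^{(u)}}^{2}$. When $\hat\rho_u=a_u/b_u>1$, evaluating the upper bound at $\rho=1$ guarantees a decrease of only $a_u-b_u/2$, and since $(a_u-b_u/2)-a_u^{2}/(2b_u)=-(a_u-b_u)^{2}/(2b_u)<0$ whenever $a_u>b_u$, this is \emph{strictly smaller} than the interior-case quantity $a_u^{2}/(2b_u)$, not larger. The uniform bound you claim therefore fails in the boundary case; what the two cases actually deliver is $\J(\x^{(u)})-\J(\x^{(u+1)})\ge\tfrac{1}{2}\min\{a_u,\,a_u^{2}/b_u\}$, and telescoping then yields only $\min\{a_u,\,a_u^{2}/b_u\}\rightarrow 0$. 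That is weaker than \eqref{eq:to-prove}: if $a_u\rightarrow 0$ while $\norm{\dv^{(u)}}$ shrinks proportionally to $a_u$, the boundary case is eventually active, $\min\{a_u,a_u^{2}/b_u\}=a_u\rightarrow 0$, yet $a_u^{2}/\norm{\dv^{(u)}}^{2}$ stays bounded away from zero. To close the argument you need either a uniform lower bound on $\norm{\dv^{(u)}}$, or the assumption (implicit and unstated in the paper's own proof) that the unconstrained step is always feasible, i.e.\ $a_u\le b_u$ for all $u$, in which case only your interior case occurs and the proof reduces exactly to the paper's.
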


\begin{proof}
By Taylor series and the Mean Value Theorem we know that,
\begin{eqnarray*}
\J \leftp \x^{(u)}_0 + \rho_u^{*} \cdot  \B^{1/2}_0 \cdot \va^{*}_u \rightp &=& \J \leftp \x^{(u)}_0 \rightp \\
&+& \rho^{*} \cdot \int_{0}^1 \, \nabla \J \leftp \x^{(u)}_0 + \rho^{*}_u \cdot t \cdot \B^{1/2}_0 \cdot \va^{*}_u \rightp^T \\
& \cdot & \B^{1/2}_0 \cdot \va^{*}_u \cdot dt \,,
\end{eqnarray*}
where $\va^{*}_u$ is given by \eqref{eq:problem-to-solve-u}, and therefore,
\begin{eqnarray*}
\J \leftp \x^{(u)}_0  \rightp - \J \leftp \x^{(u+1)}_0  \rightp &\ge & - \rho^{*}_u \cdot \int_{0}^1 \, \nabla \J \leftp \x^{(u)}_0 + \rho^{*}_u \cdot t \cdot \B^{1/2}_0 \cdot \va^{*}_u \rightp^T \\
& \cdot & \Q^{(u)} \cdot \vw^{*} \cdot dt 
\end{eqnarray*}
for any $\x^{(u+1)}$ on the ray $\x^{(u)}+ \rho_u \cdot \BE^{1/2}_0 \cdot \va^{*}_u$, with $\rho_u \in [0,\,1]$, we have
\begin{eqnarray*}
\displaystyle
\J \leftp \x^{(u)}  \rightp - \J \leftp \x^{(u+1)}  \rightp \ge \J \leftp \x^{(u)}  \rightp - \J \leftp \x^{(u)} + \rho^{*}_u \cdot  \BE^{1/2}_0 \cdot \va^{*}_u \rightp\,,
\end{eqnarray*}
hence:
\begin{eqnarray*}
\displaystyle
\J \leftp \x^{(u)}  \rightp &-& \J \leftp \x^{(u+1)}  \rightp \ge  -\rho^{*}_u \cdot \nabla \J \leftp \x^{(u)} \rightp^T \cdot \BE^{1/2}_0 \cdot \va_u^{*} \\
&-& \rho^{*}_u \cdot  \int_{0}^1 \, \leftb \nabla \J \leftp \x^{(u)} + \rho^{*}_u \cdot t \cdot \BE^{1/2}_0 \cdot\va^{*}_u \rightp - \nabla \J \leftp \x^{(u)} \rightp \rightb^T \\
& \cdot & \BE^{1/2} \cdot \va^{*}_u \cdot dt \,,
\end{eqnarray*}
by the Cauchy Schwarz inequality, we have
\begin{eqnarray*}
\displaystyle
\J \leftp \x^{(u)}  \rightp &-& \J \leftp \x^{(u+1)}  \rightp \ge  -\rho^{*}_u \cdot \nabla \J \leftp \x^{(u)} \rightp^T \cdot \BE^{1/2} \cdot \va^{*}_u \\
&-& \rho^{*}_u \cdot \int_{0}^1 \, \norm{ \nabla \J \leftp \x^{(u)} + \rho^{*}_u \cdot t \cdot \BE^{1/2} \cdot \va^{*}_u \rightp - \nabla \J \leftp \x^{(u)} \rightp } \\
& \cdot & \norm{\BE^{1/2} \cdot \va^{*}_u} \cdot dt \\
&\ge &  -\rho^{*}_u \cdot \nabla \J \leftp \x^{(u)} \rightp^T \cdot \BE^{1/2}_0 \cdot \va^{*}_u \\
&-& \rho^{*}_u \cdot \int_{0}^1 \,L \cdot  \norm{  \rho^{*}_u \cdot t \cdot \BE^{1/2}_0 \cdot \va^{*}_u   } \cdot \norm{\BE^{1/2}_0 \cdot \va_u^{*}} \cdot dt \\
&=& - \rho^{*}_u \cdot \nabla \J \leftp \x^{(u)} \rightp^T \cdot \BE^{1/2}_0 \cdot \va^{*}_u \\
&-& \rho^{*}_u \cdot L \cdot \norm{\BE^{1/2}_0 \cdot \va^{*}_u} \cdot \int_{0}^1 \norm{t \cdot \rho^{*}_u \cdot \BE^{1/2}_0 \cdot \va^{*}_u} \cdot dt \\
&=& - \rho^{*}_u \cdot \nabla \J \leftp \x^{(u)} \rightp^T \cdot \BE^{1/2}_0 \cdot \va^{*}_u  - \frac{1}{2} \cdot {\rho^{*}_u}^2 \cdot L \cdot \norm{\BE^{1/2}_0 \cdot \va^{*}_u}^2 \,,
\end{eqnarray*}
since the gradient is Lipschitz continuous, we can choose
\begin{eqnarray*}
\displaystyle
\rho^{*}_u = -\frac{\nabla \J \leftp \x^{(u)} \rightp^T \cdot \BE^{1/2}_0 \cdot \va^{*}_u}{L \cdot \norm{\BE^{1/2}_0 \cdot \va^{*}_u}^2} \,,
\end{eqnarray*}
therefore,
\begin{eqnarray*}
\displaystyle
\J \leftp \x^{(u)}  \rightp &-& \J \leftp \x^{(u+1)}  \rightp \ge \frac{\leftb \nabla \J \leftp \x^{(u)} \rightp^T \cdot \BE^{1/2}_0 \cdot \va^{*}_u \rightb^2}{L \cdot \norm{\BE^{1/2}_0 \cdot \va^{*}_u}^2} \\
&-& \frac{1}{2} \cdot \frac{\leftb- \nabla \J \leftp \x^{(u)} \rightp^T \cdot \BE^{1/2}_0 \cdot \va^{*}_u \rightb^2}{L \cdot \norm{\BE^{1/2}_0 \cdot \va^{*}_u}^2} \\
 &=& \frac{1}{2 \cdot L} \cdot \leftb- \frac{ \nabla \J \leftp \x^{(u)} \rightp^T \cdot \BE^{1/2}_0 \cdot \va^{*}_u }{ \norm{\BE^{1/2}_0 \cdot \va^{*}_u}} \rightb^2 \,.
\end{eqnarray*}
By \eqref{cond:cond1}, and \eqref{eq:theorem-ortho}, it follows that $\leftl \J \leftp \x^{(u)}\rightp\rightl_{u = 0}^{\infty}$ is a monotone decreasing number sequence and it has a bound below, therefore $\leftl \J \leftp \x^{(u)}\rightp\rightl_{u = 0}^{\infty}$ has a limit, and consequently \eqref{eq:to-prove} holds.
\end{proof}

\section{Experimental Settings}
\label{sec:experimental-settings}

We make use of the Lorenz-96 model \cite{LorenzModel} as our surrogate model in order to tests the compared EnKF implementations. The Lorenz-96 model  is described by the following set of ordinary differential equations \cite{fertig2007comparative}:
\begin{eqnarray}
\label{eq:Lorenz-model}
\displaystyle
\frac{dx_j}{dt} = \begin{cases}
\leftp x_2-x_{\Nstate-1}\rightp \cdot x_{\Nstate} -x_1 +F & \text{ for $j=1$}, \\
\leftp x_{j+1}-x_{j-2}\rightp \cdot x_{j-1} -x_j +F & \text{ for $2 \le j \le \Nstate-1$}, \\
\leftp x_1-x_{\Nstate-2}\rightp \cdot x_{\Nstate-1} -x_{\Nstate} +F & \text{ for $j=\Nstate$}, 
\end{cases}
\end{eqnarray}
where $F$ is an external force and $\Nstate=40$ is the number of model components. Periodic boundary conditions are assumed. When $F=8$ units, the model exhibits chaotic behavior, which makes it a relevant surrogate problem for atmospheric dynamics \cite{karimi2010extensive,gottwald2005testing}. A time unit in the Lorenz-96 represents 120 hours (5 days) in the atmosphere. 

The experimental settings are described below, they are similar to those presented in \cite{nerger2014influence}:
\begin{itemize}
\item An initial random solution is integrated over a long time period in order to obtain an initial condition $\x^{*}_{-2} \in \Re^{\Nstate \times 1}$ dynamically consistent with the model \eqref{eq:Lorenz-model}.
\item A perturbed background solution $\x^b_{-2}$ is obtained at time $t_{-2}$ by drawing a sample from the Normal distribution,
\begin{eqnarray*}
\displaystyle
\x^b_{-2} \sim \Nor \leftp \x^{*}_{-2},\, 0.05^2 \cdot \I  \rightp \,,
\end{eqnarray*}
this solution is then integrated for 10 time units (equivalent to 50 days in the atmosphere) in order to obtain a background solution $\x^b_{-1}$ consistent with the numerical model.
\item An initial perturbed ensemble is built about the background state by taking samples from the distribution,
\begin{eqnarray*}
\displaystyle
\x^{b[i]}_{-1} \sim \Nor \leftp \x^{b}_{-1},\, 0.05^2 \cdot \I  \rightp \,, \text{ for $1 \le i \le \Nens$} \,.
\end{eqnarray*}
These samples are propagated over a time period of 10 units in order to make them consistent with the model dynamics \eqref{eq:Lorenz-model}, after which the initial ensemble members $\x^{b[i]}_0 \in \Re^{\Nstate \times 1}$ are obtained. We build an initial pool of members. Two-dimensional projections of such collection are shown in figure \ref{fig:initial-pool}.
\begin{figure}[H]
    \centering
    \begin{subfigure}[b]{0.5\textwidth}
        \centering
        \includegraphics[height=1.2in]{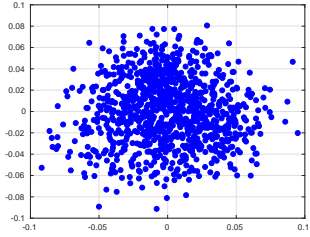}
        \caption{2D projection of the inital members.}
    \end{subfigure}%
    ~ 
    \begin{subfigure}[b]{0.5\textwidth}
        \centering
        \includegraphics[height=1.25in]{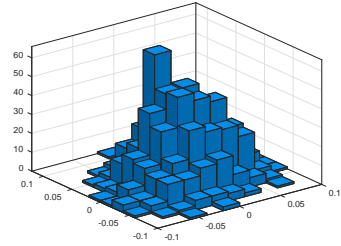}
        \caption{2D initial background error distribution.}
    \end{subfigure}
    \caption{Initial pool of background members for the experiments. Two dimensional projections based on the two leading directions are shown.}
    \label{fig:initial-pool}
\end{figure}
\item The number of assimilation steps reads $\N=500$. Observations are collected every 12 hours and their error statistics are described by the Gaussian distribution. We use the Dormand and Prince method \cite{dormand1980family} with absolute error tolerance of $10^{-7}$ in order to propagate the numerical model every 0.5 time units,
\begin{eqnarray}
\label{eq:observation-settings}
\displaystyle
\y_k \sim \Nor \leftp \H_k \cdot \x^{*}_k\,, 0.01^2 \cdot \I \rightp \,, \text{ for $1 \le k \le \N$} \,.
\end{eqnarray}
We try different number of observations per assimilation step, these are given by
\begin{eqnarray*}
\displaystyle 
\Nobs = p \cdot \Nstate \,,
\end{eqnarray*}
where $p = \{70 \% \,,100 \% \}$, which corresponds with $m \in \{28,\,40\}$ observations.  Observations from the model state are randomly chosen at the different assimilation steps. We employ the non-linear observation operator \cite{van2015nonlinear}:
\begin{eqnarray}
\label{eq:operator-non-smooth}
\displaystyle
\leftl \Ho \leftp \x \rightp \rightl_j = \frac{\leftl \x \rightl_j }{2} \cdot \leftb \leftp\frac{\lefta \leftl \x \rightl_j \righta}{2} \rightp^{\gamma-1} +1 \rightb \,,
\end{eqnarray}
where $j$ denotes the $j$-th observed component from the model state. Likewise, $1 \le \gamma \le 7$. Note that, the experiments are performed under perfect model assumptions.
\item We evenly range the inflation factor $\rho \in [1.1,\, 1.9]$ with increments of $0.1$.
\item Different ensemble sizes are tried during the experiments, $\Nens \in \{20,\,40\}$.
\item The L--2 norm of errors is utilized as a measure of accuracy, at the assimilation step $k$, it is defined as follows:
\begin{eqnarray}
\label{eq:errors-L2}
\lambda_k = \norm{\x^{*}_k - \xm^{a}_k}_2 = \sqrt{\leftb \x^{*}_k - \xm^{a}_k \rightb^T \cdot \leftb \x^{*}_k - \xm^{a}_k \rightb} \,,
\end{eqnarray}
where $\x^{*}_k$ and $\x^a_k$ are the reference and the analysis solutions, respectively.
\item The Root-Mean-Square-Error (RMSE) measures the performance of a filter, in average, for a given number of assimilation steps $\N$:
\begin{eqnarray*}
\displaystyle
\lambda = \sqrt{\frac{1}{\N} \cdot \sum_{k=1}^{\N} \lambda_k^2} \,.
\end{eqnarray*}
\item The number of optimization steps $U$ equals 10. We choose this value for two main reasons: model propagations are involved during optimization steps and even more, for either of the compared filter implementations, the 4D-EnKF-MC or the 4D-MLEF, after 10 iterations no reduction in cost-function values can be evidenced. 
\end{itemize}

For each configuration of the parameters, we perform 30 experiments to get insights about the error behavior and his associate statistics. We employ the first two statistical moments (mean and variance) of underlying error distributions to analyze the numerical results. The analysis is divided into two groups, (nearly) linear observation operators where $1 \le \gamma \le 3$ and highly non-linear observation operators $4 \le \gamma \le 7$.

\subsection{Results with (non-) linear observation operators $1 \le \gamma  \le 3$}

For different values of $\Nens$ and $1 \le \gamma \le 3$, the results are shown in figures \ref{fig:1-3-70} and \ref{fig:1-3-100} for $p=70\%$ and $p=100\%$ (full observational networks), respectively. We set $\ra = 2$ to build the set of basis vectors in the 4D-EnKF-MC context. As expected, the accuracy of both filter implementations, the 4D-EnKF-MC, and the 4D-MLEF, are better than that of the NO Data Assimilation (NODA) forecast, which is reported as well. Note that increments in the ensemble size have a clear impact on the 4D-EnKF-MC performance; the larger the ensemble size, the better (in terms of accuracy) the posterior estimate, for instance, the L--2 error norms are decreased by some order of magnitudes. This behavior is not evident in the 4D-MLEF method wherein slight improvements can be evidenced. Note that both filter implementations benefit from having more observations to digest during assimilation steps. This is, errors can be decreased as the number of observations is increased if the assimilation window keeps constant. Again, this patron is more evident in the 4D-EnKF-MC method than in the 4D-MLEF filter.  This can be explained as follows: during assimilation steps, the 4D-EnKF-MC method builds control spaces whose dimensions equal the model one. Thus, all information brought by the observations can be captured onto that space (whose dimension is larger than the observation one), not the 4D-MLEF whose space dimension relies on the ensemble size.

\subsection{Results with highly non-linear observation operators $4 \le \gamma \le 7$}
For different values of $\Nens$ and $4 \le \gamma \le 7$, the results are shown in figures \ref{fig:4-7-70} and \ref{fig:4-7-100} for $p=70\%$ and $p=100\%$ (full observational networks), respectively. Again, we set $\ra = 2$ to build the set of basis vectors in the 4D-EnKF-MC context. As expected, all filter implementations provide better results than those obtained by the NODA forecast. In all cases, the results proposed by the 4D-EnKF-MC are better than those of the 4D-MLEF in terms of L--2 error norms. For $p = 70 \%$, the results of the compared filter implementations get closer as the degree $\gamma$ increases. Nevertheless, as the number of observations increases, the 4D-EnKF-MC provides better results than those of the 4D-MLEF. Note that, in the 4D-EnKF-MC, the quality of analysis corrections improves as the number of observations increases. Increments in the ensemble size have a clear impact in the quality of both filters, the larger the ensemble size, the better the quality of analysis corrections.

\subsection{Further discussions}

For all parameter configurations, the results are shown in the Tables \ref{tab:rho-0.7} and \ref{tab:rho-1} for $p = 70 \%$ and $p = 100 \%$, respectively. In the 4D-EnKF-MC method, besides $\ra = 2$, the values of 6 and 18 are tried as the radius of influence for building the set of basis vectors during assimilation steps, in this manner, we can have an idea about how the radius size impacts the quality of the initial analysis increments. The first thing to note is that, as the radius of influence increases, the initial analysis corrections' quality degrades. This can be expected given the special form of the error dynamics in the numerical model \eqref{eq:Lorenz-model} wherein fast decay of error correlations can be possible. For instance, the model dynamics strongly correlate errors among neighboring components. Thus, having large radii of influences can impact the quality of posterior initial members. Note that the behavior of 4D-MLEF is similar for all $\gamma$ values, while for the 4D-EnKF-MC RMSE, values degrade as the $\gamma$ increases. However, for all scenarios, the results of the 4D-EnKF-MC are better than those of the 4D-MLEF. In both filter implementations, their performance can be improved by increasing the number of ensemble members. For instance, in the 4D-MLEF, the larger the ensemble size, the more degrees of freedom are available during optimization steps. In the 4D-EnKF-MC method, under such circumstances, more information about the error dynamics is available to fit the models \eqref{eq:fitting-models-of-the-form}.

\begin{figure}[H]
\centering
\begin{tabular}{ccc} \\ \hline
  & $\Nens = 20$ & $\Nens = 60$   \\ \hline
 \raisebox{-16mm}{\rotatebox[origin=c]{90}{$\gamma=1$}} &    \raisebox{-\totalheight}{\includegraphics[width=0.40\textwidth]{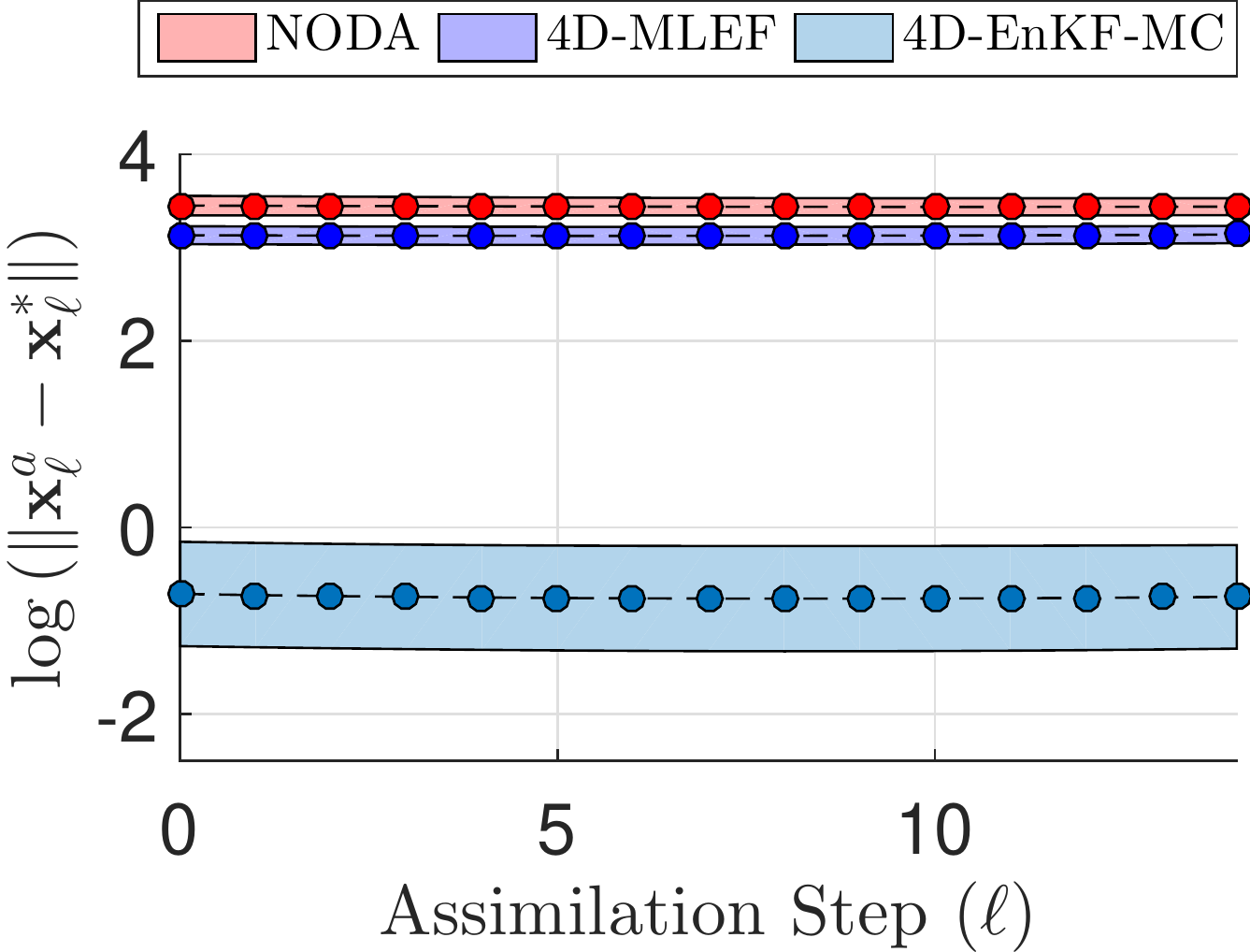}} &    \raisebox{-\totalheight}{\includegraphics[width=0.40\textwidth]{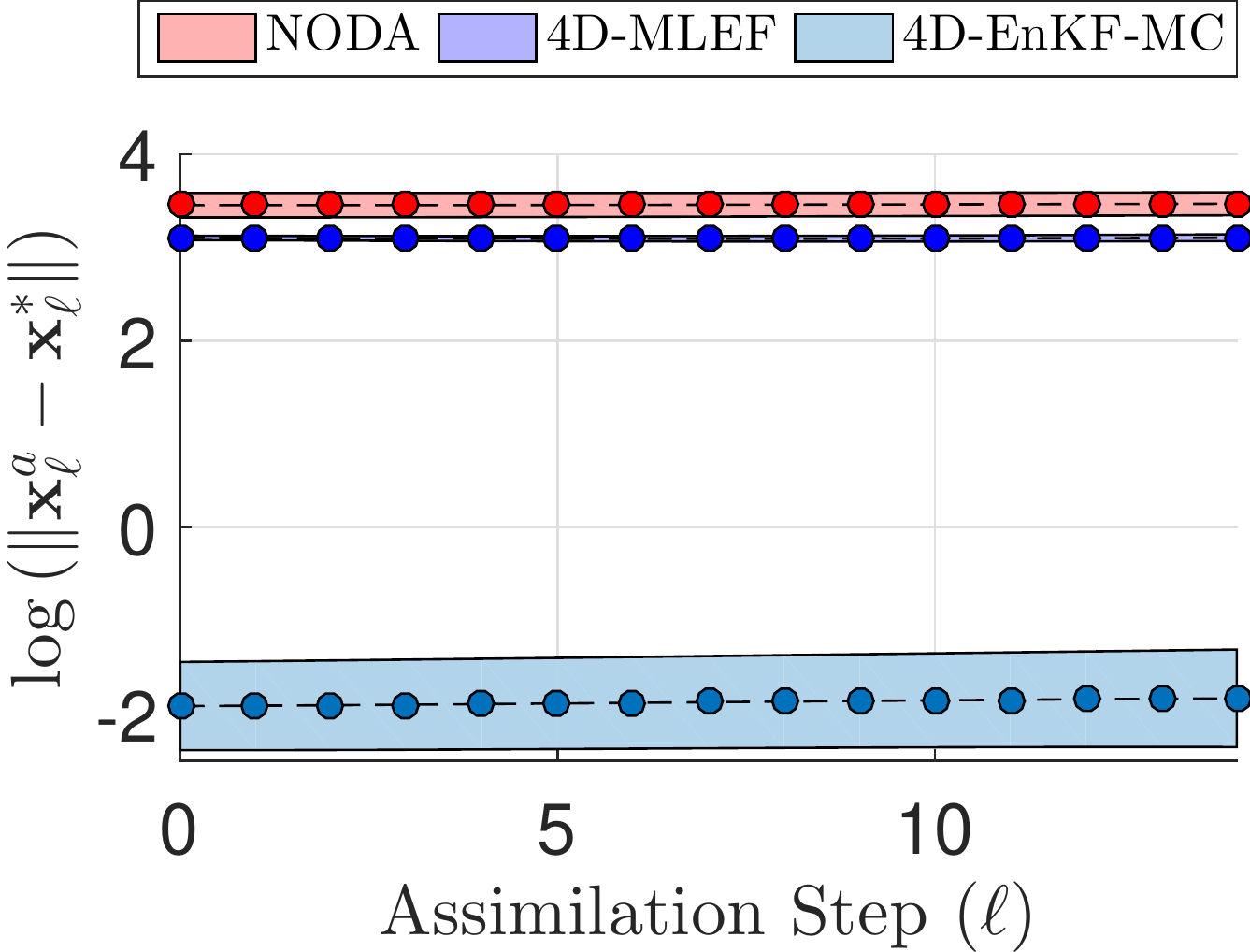}} \\ \hline
 \raisebox{-16mm}{\rotatebox[origin=c]{90}{$\gamma=2$}} &    \raisebox{-\totalheight}{\includegraphics[width=0.40\textwidth]{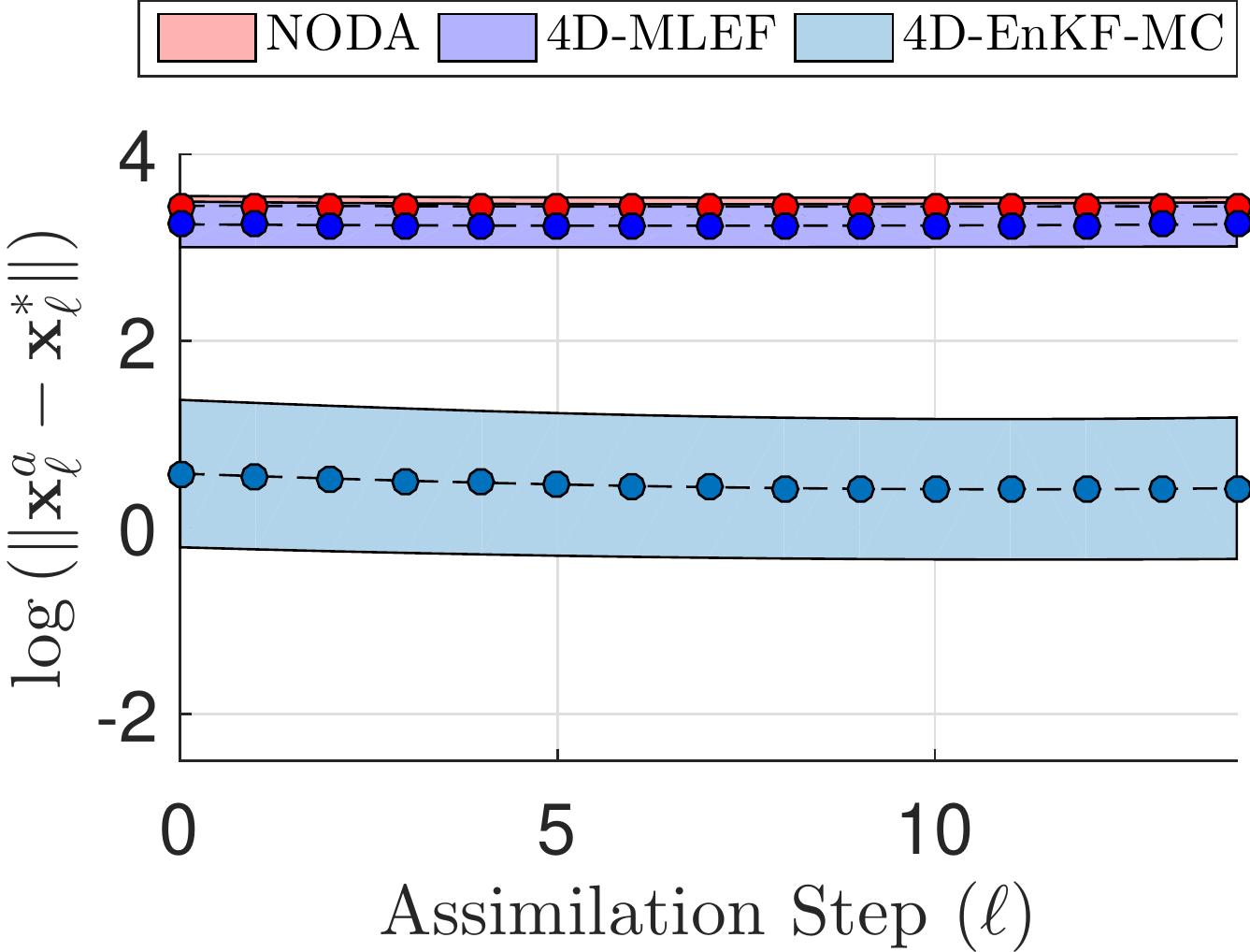}} &    \raisebox{-\totalheight}{\includegraphics[width=0.40\textwidth]{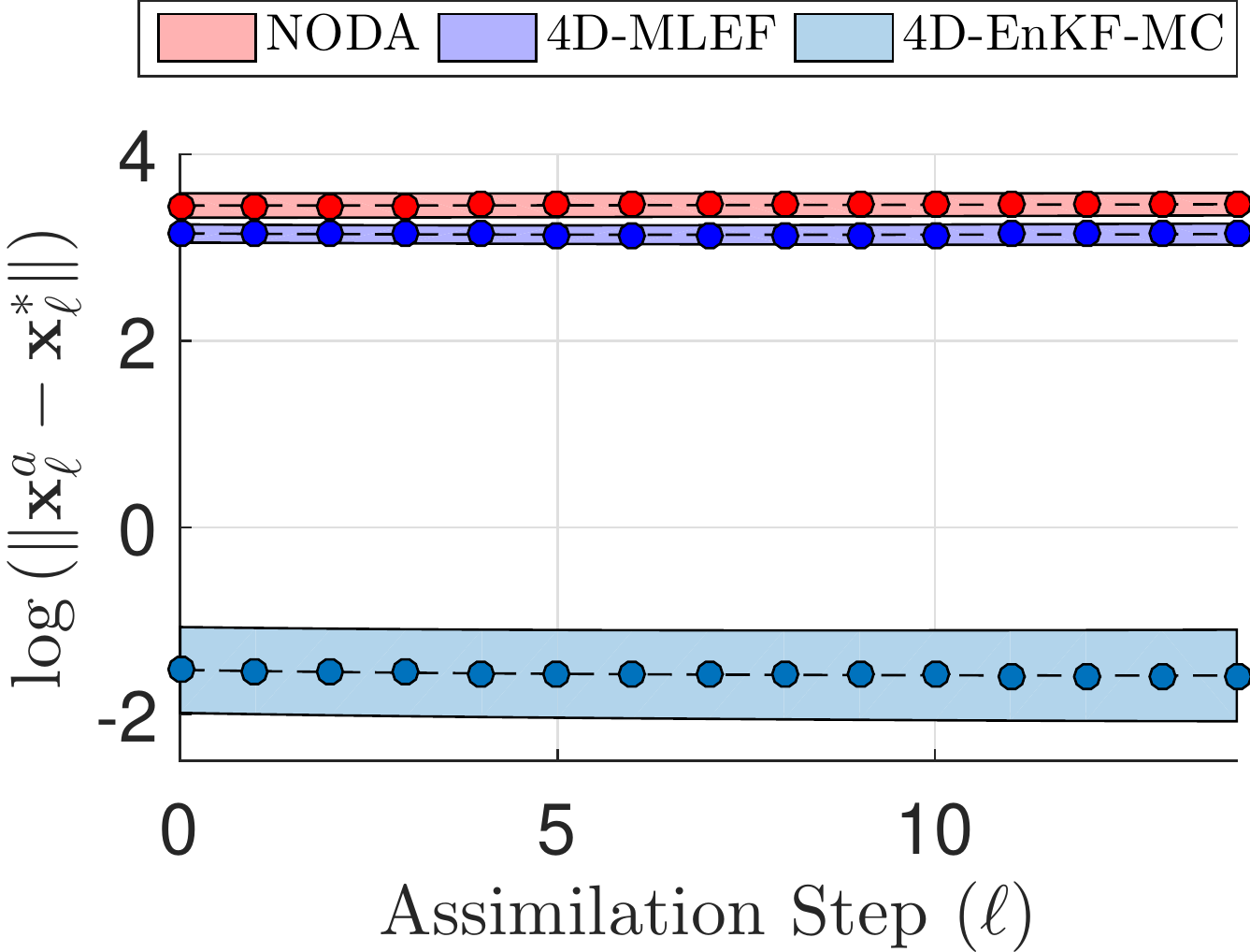}} \\ \hline
  \raisebox{-16mm}{\rotatebox[origin=c]{90}{$\gamma=3$}} &    \raisebox{-\totalheight}{\includegraphics[width=0.40\textwidth]{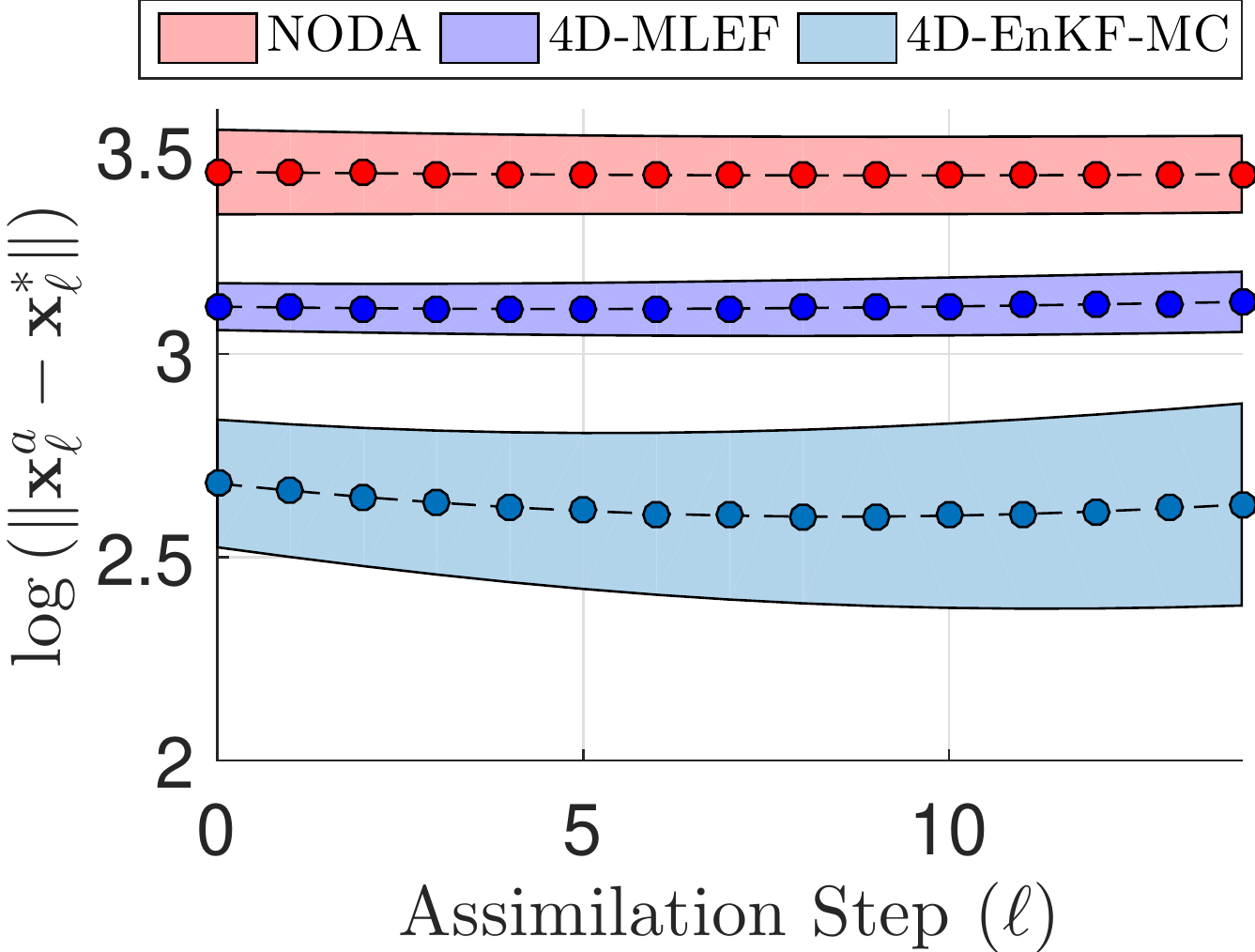}} &    \raisebox{-\totalheight}{\includegraphics[width=0.40\textwidth]{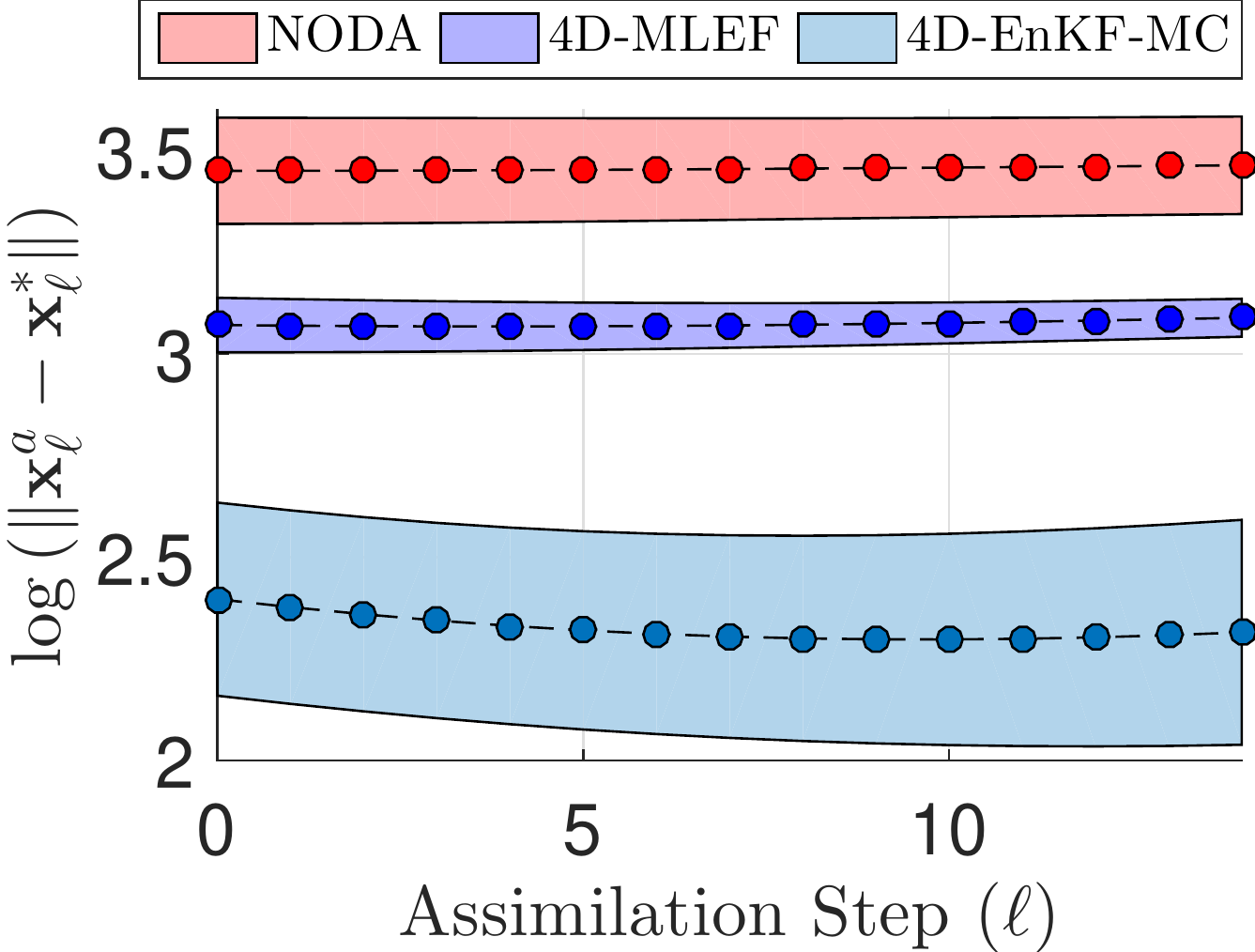}} \\ \hline
\end{tabular}
\caption{Averages (dashed lines) and standard deviations (shaded regions) of error norms for $p = 70\%$. We let $\Nens \in \{20,\,60\}$. Likewise, $\gamma \in \{1,\, 2,\, 3\}$.}
\label{fig:1-3-70}
\end{figure}
\begin{figure}[H]
\centering
\begin{tabular}{ccc} \\ \hline
  & $\Nens = 20$ & $\Nens = 60$   \\ \hline
 \raisebox{-16mm}{\rotatebox[origin=c]{90}{$\gamma=1$}} &    \raisebox{-\totalheight}{\includegraphics[width=0.40\textwidth]{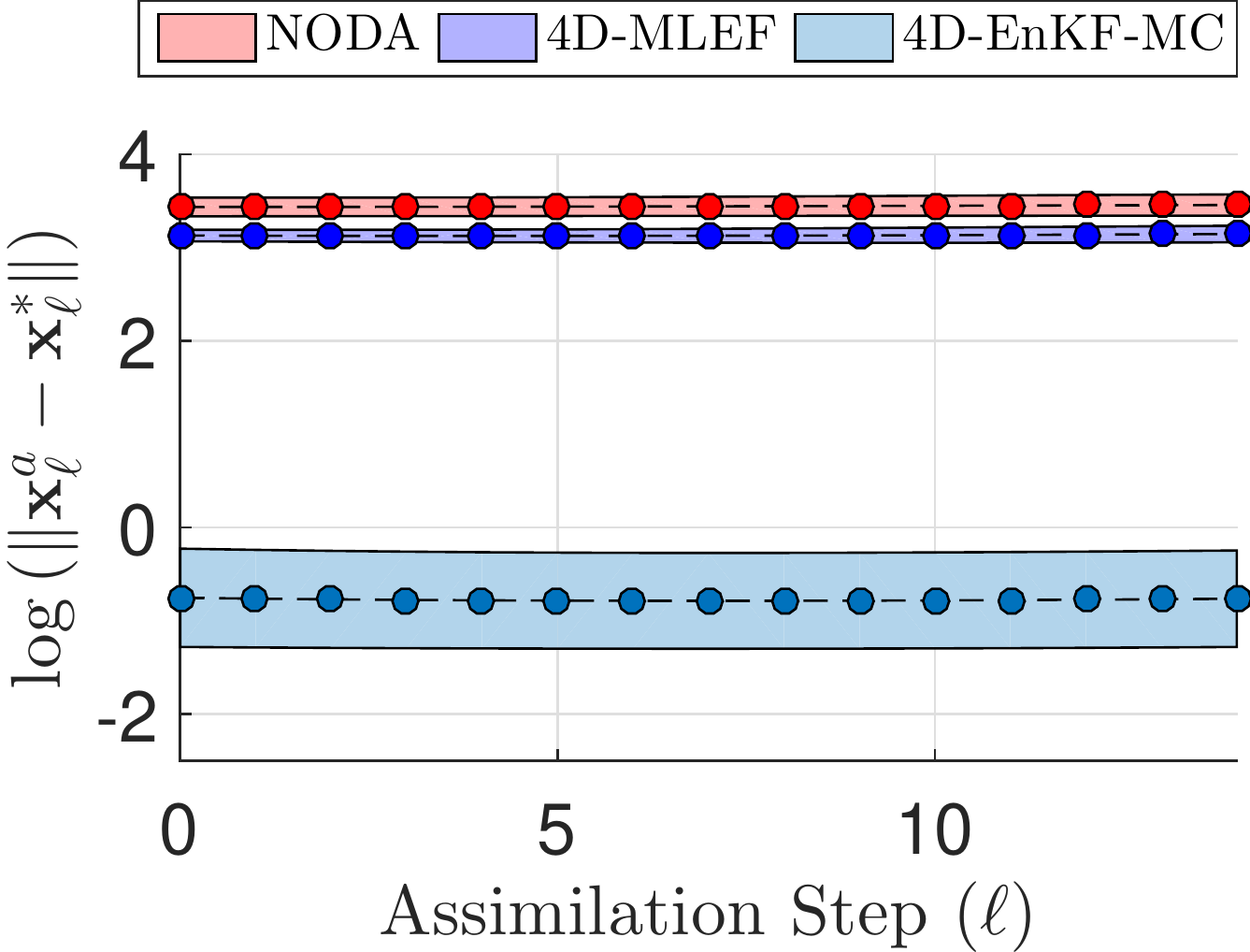}} &    \raisebox{-\totalheight}{\includegraphics[width=0.40\textwidth]{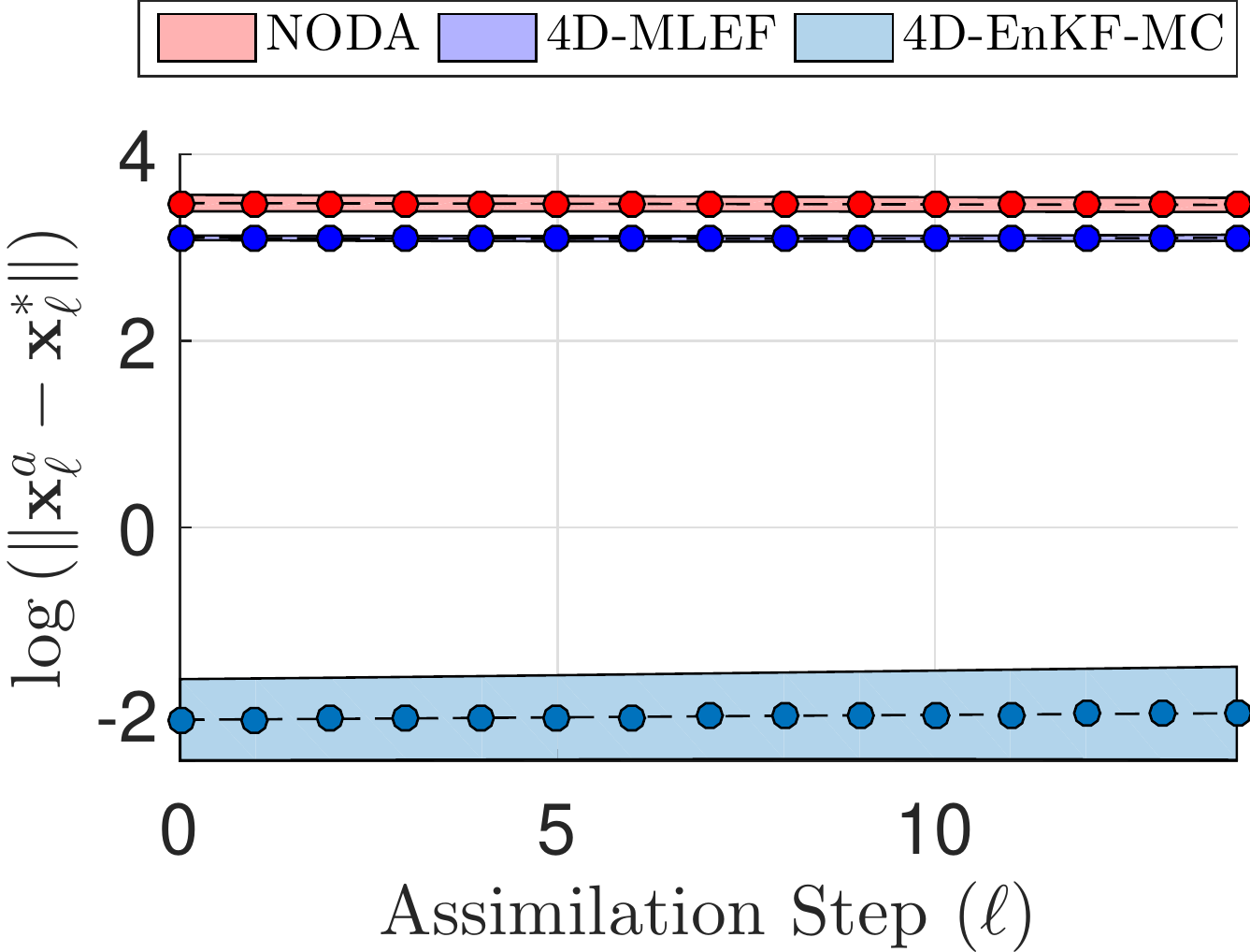}} \\ \hline
 \raisebox{-16mm}{\rotatebox[origin=c]{90}{$\gamma=2$}} &    \raisebox{-\totalheight}{\includegraphics[width=0.40\textwidth]{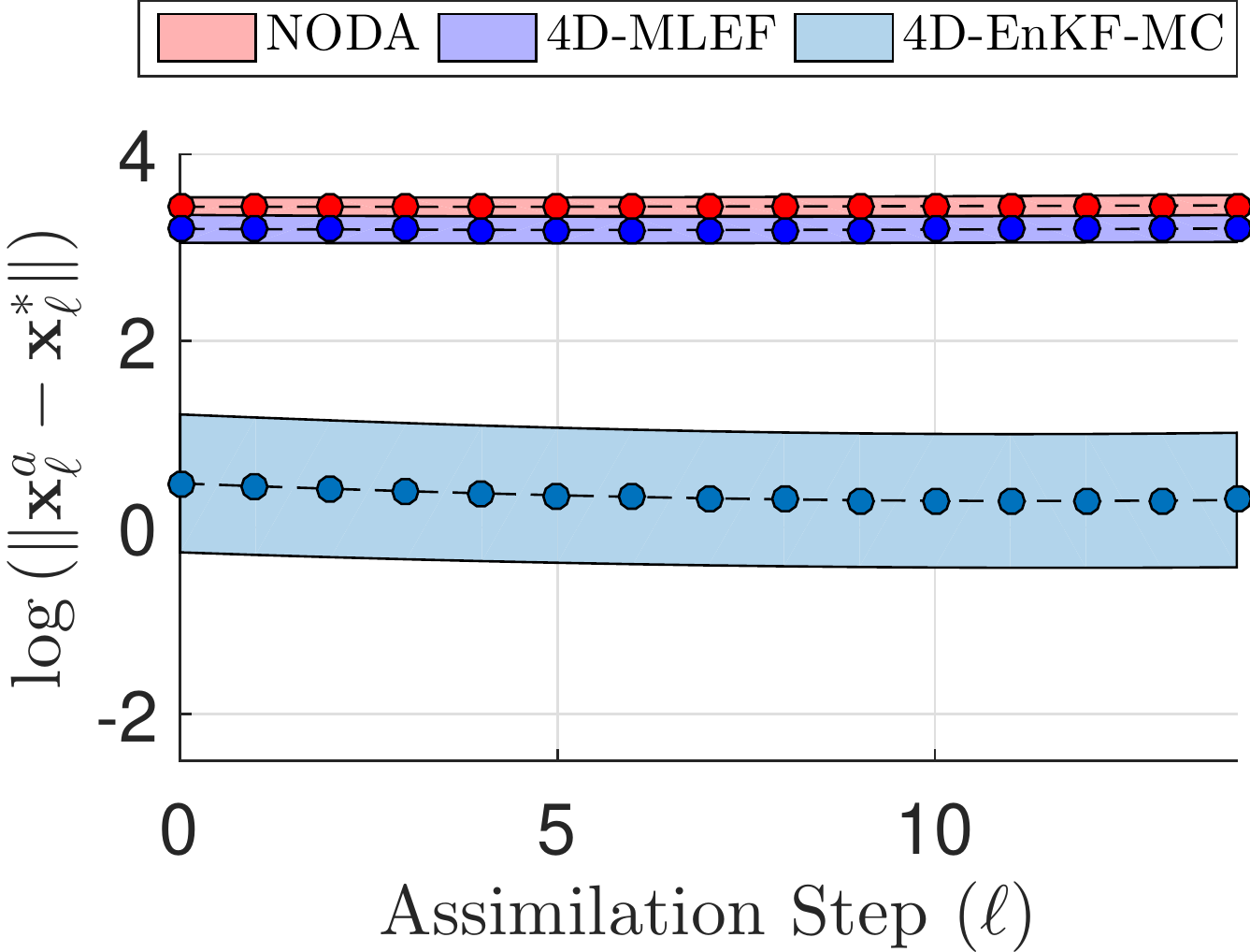}} &    \raisebox{-\totalheight}{\includegraphics[width=0.40\textwidth]{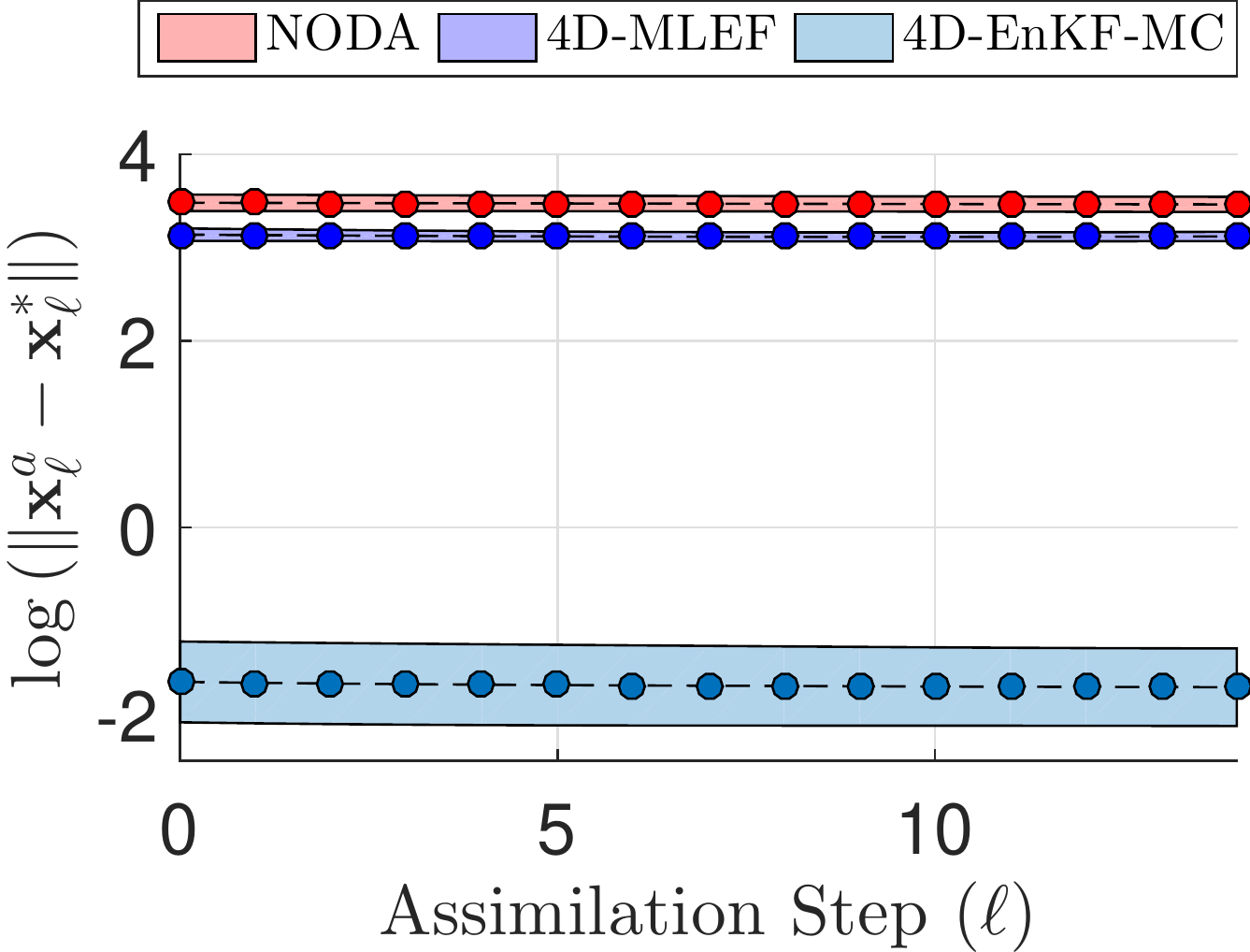}} \\ \hline
  \raisebox{-16mm}{\rotatebox[origin=c]{90}{$\gamma=3$}} &    \raisebox{-\totalheight}{\includegraphics[width=0.40\textwidth]{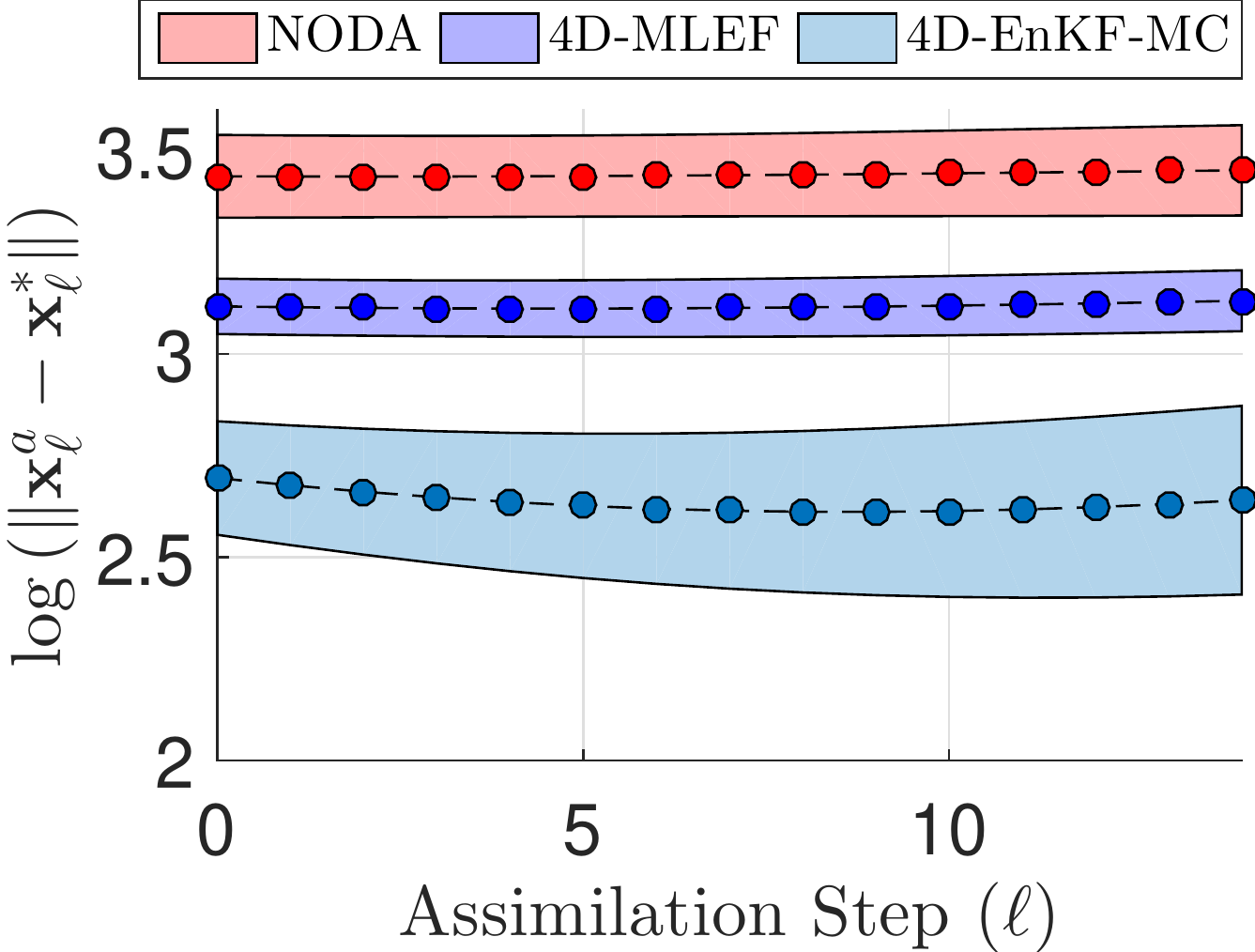}} &    \raisebox{-\totalheight}{\includegraphics[width=0.40\textwidth]{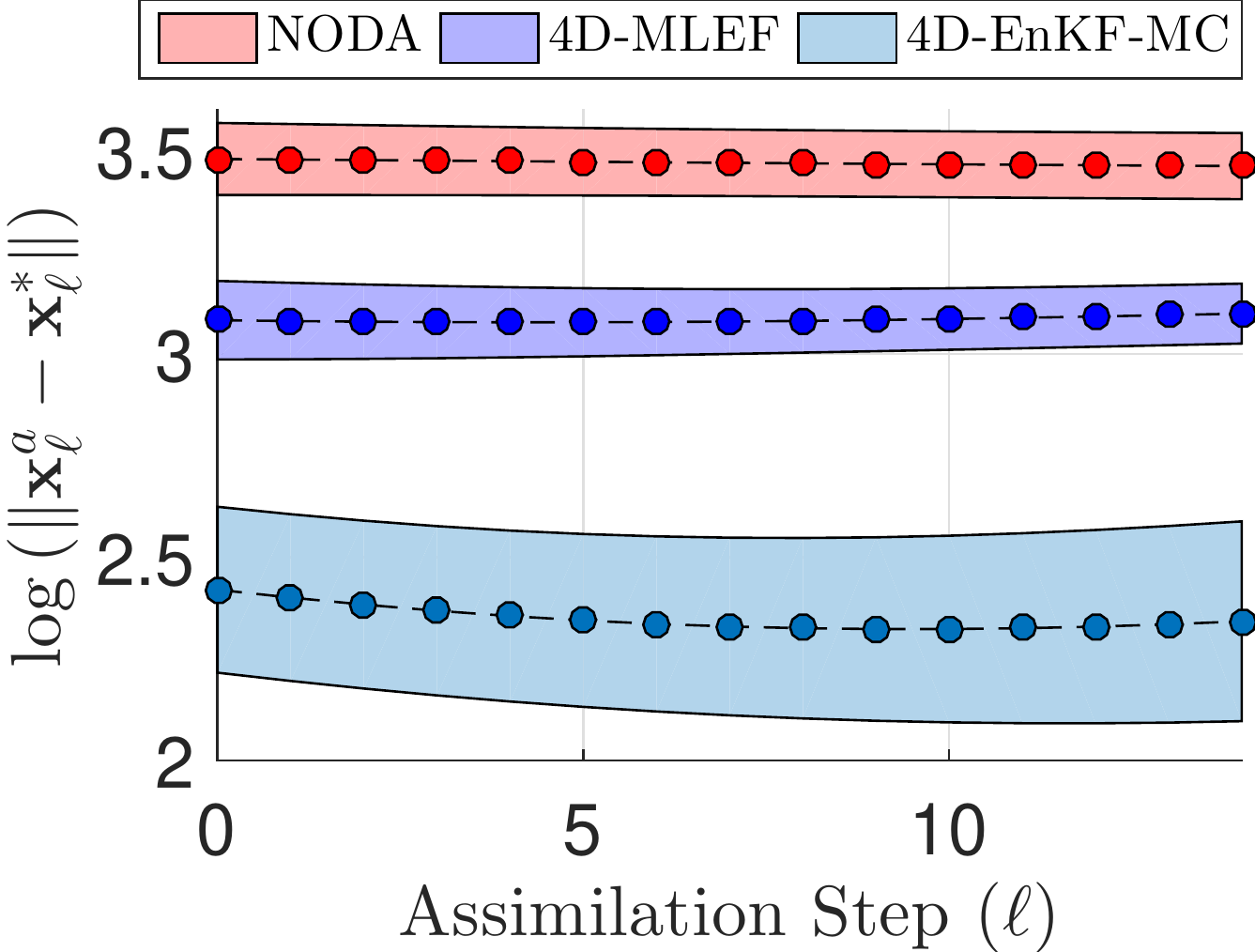}} \\ \hline
\end{tabular}
 \caption{Averages (dashed lines) and standard deviations (shaded regions) of error norms for 30 realizations of experiments and the values of parameters $p = 100 \%, \Nens \in \{20,\,60\}$, and $1 \le \gamma \le 3$.}
\label{fig:1-3-100}
\end{figure}

\begin{figure}[H]
\centering
\begin{tabular}{ccc} \\ \hline
  & $\Nens = 20 \%$ & $\Nens = 60 \%$   \\ \hline
 \raisebox{-16mm}{\rotatebox[origin=c]{90}{$\gamma=4$}} &    \raisebox{-\totalheight}{\includegraphics[width=0.40\textwidth]{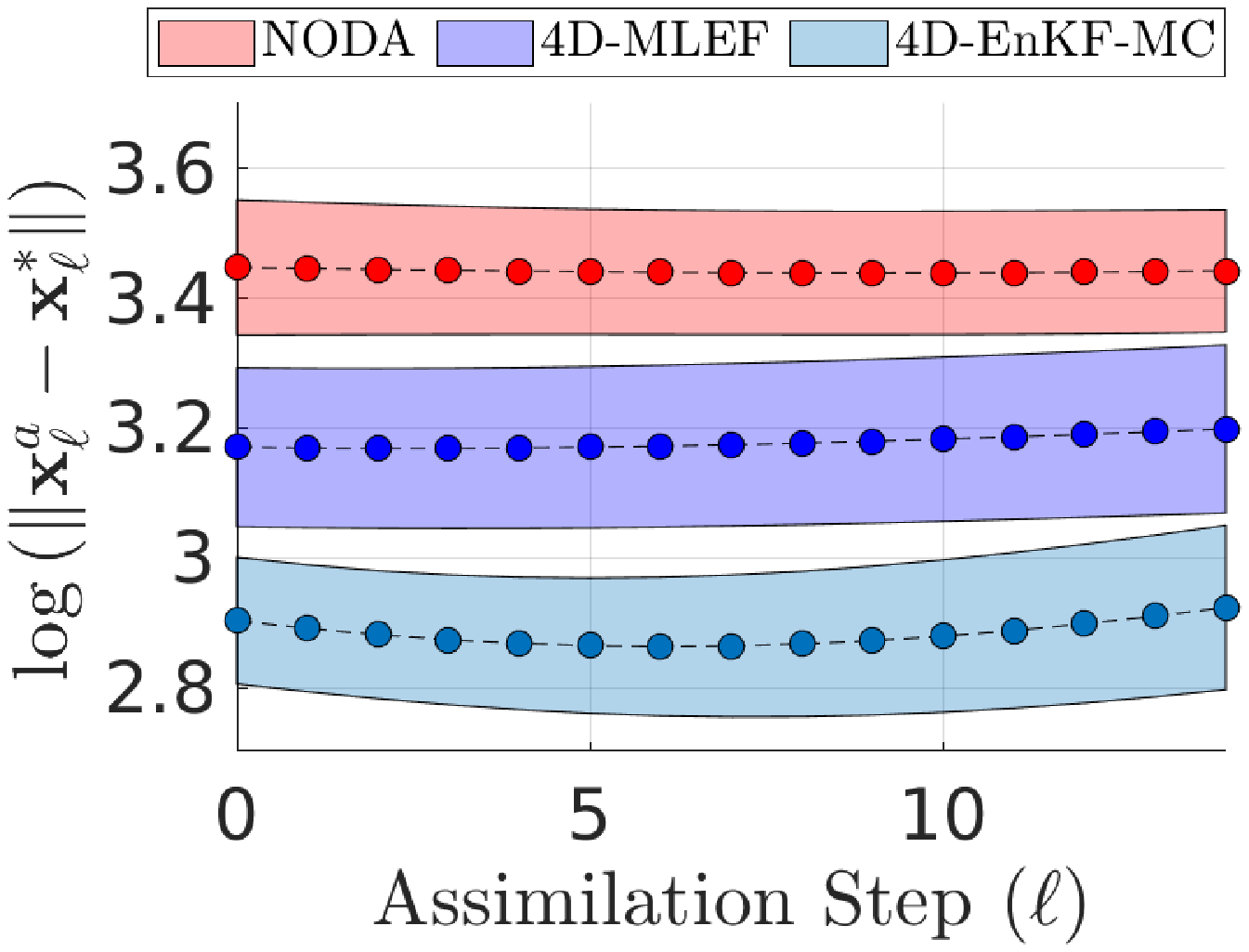}} &    \raisebox{-\totalheight}{\includegraphics[width=0.40\textwidth]{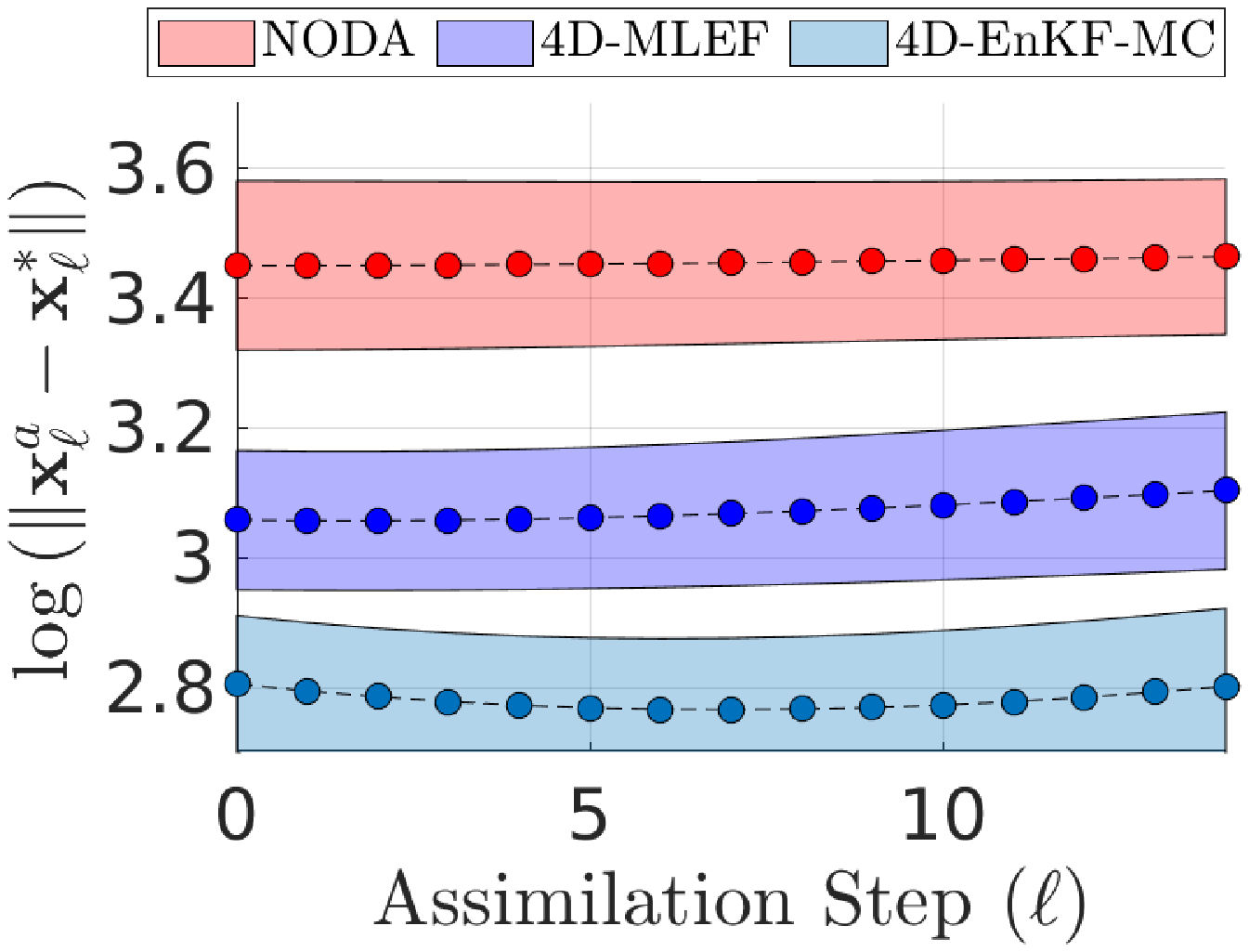}} \\ \hline
 \raisebox{-16mm}{\rotatebox[origin=c]{90}{$\gamma=5$}} &    \raisebox{-\totalheight}{\includegraphics[width=0.40\textwidth]{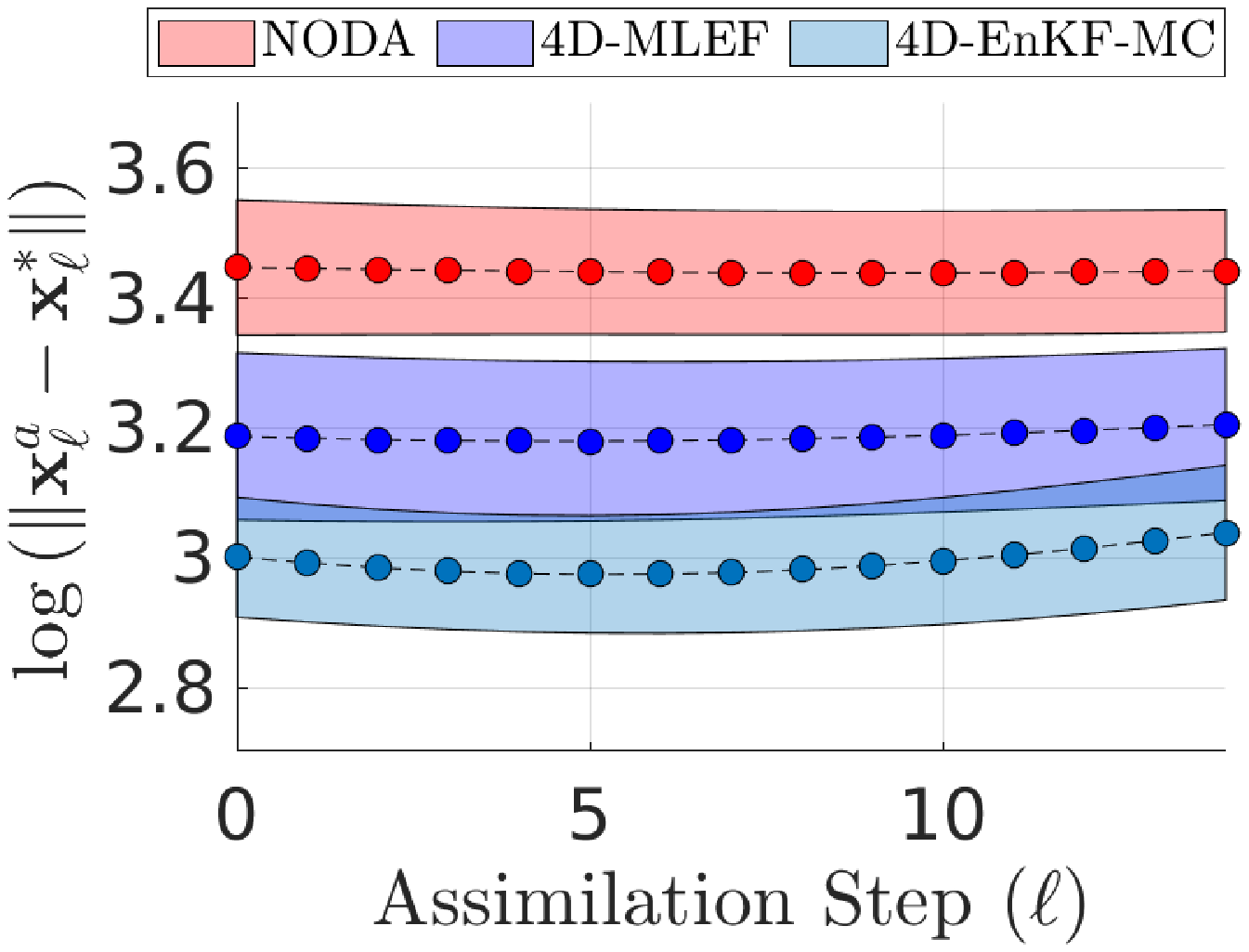}} &    \raisebox{-\totalheight}{\includegraphics[width=0.40\textwidth]{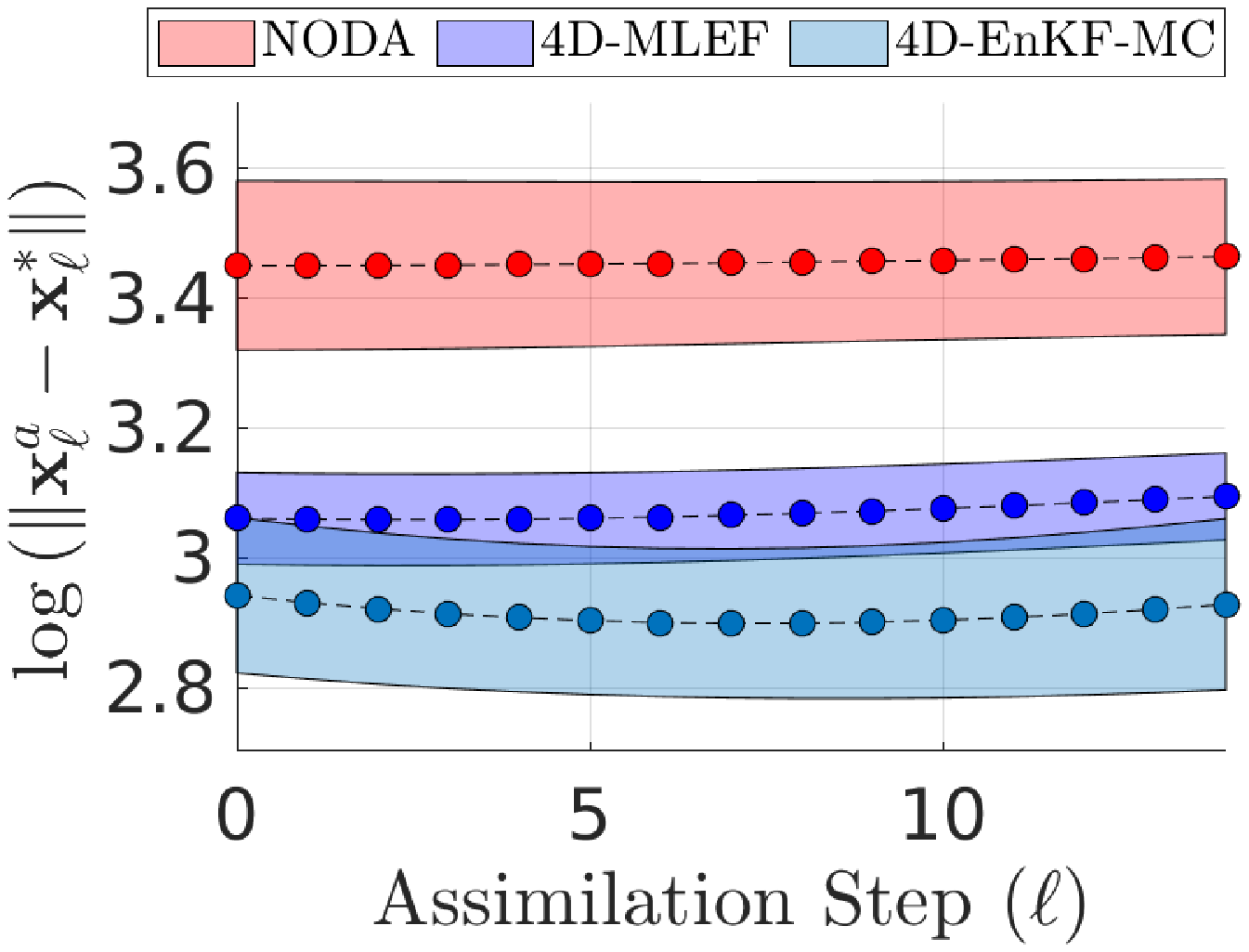}} \\ \hline
  \raisebox{-16mm}{\rotatebox[origin=c]{90}{$\gamma=6$}} &    \raisebox{-\totalheight}{\includegraphics[width=0.40\textwidth]{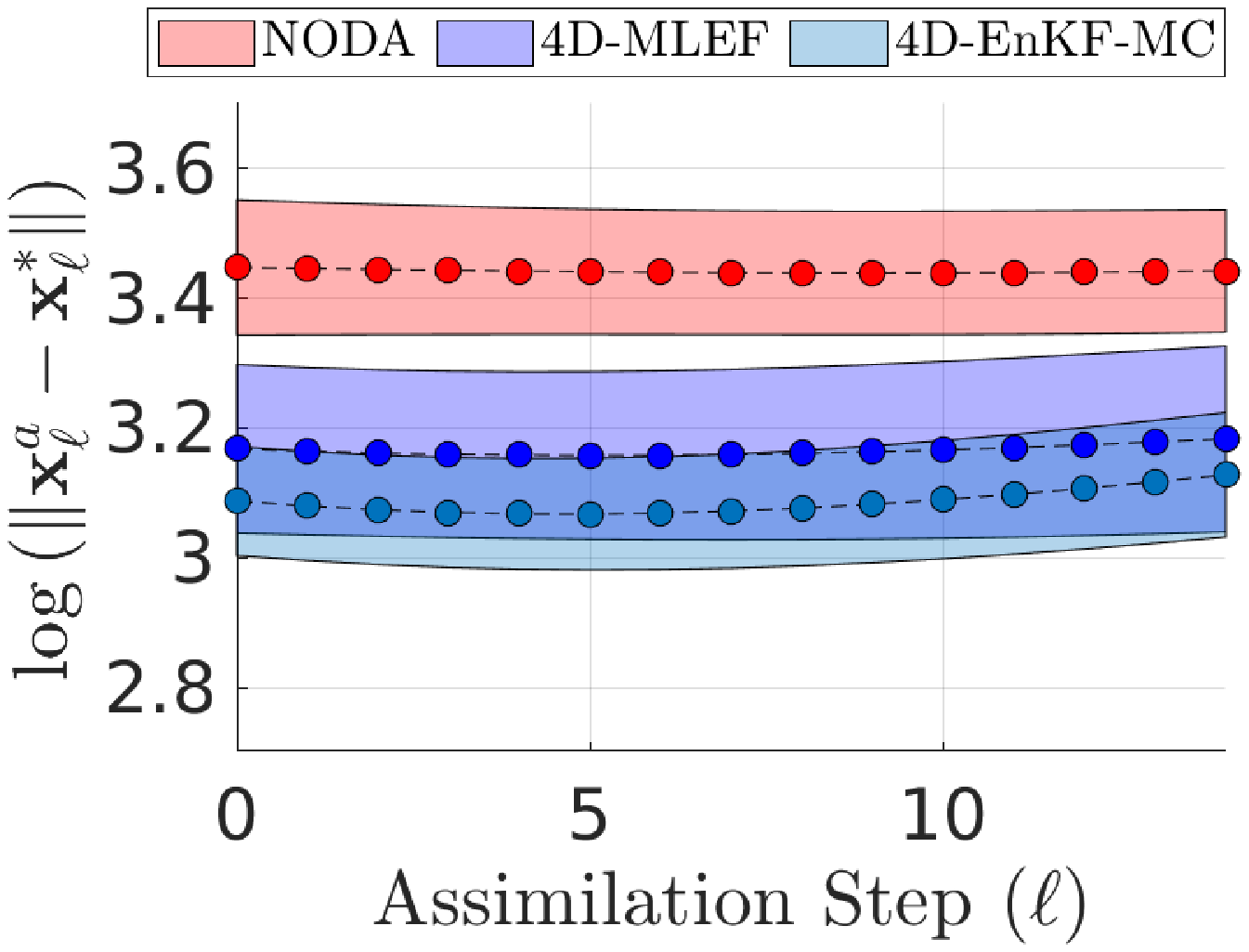}} &    \raisebox{-\totalheight}{\includegraphics[width=0.40\textwidth]{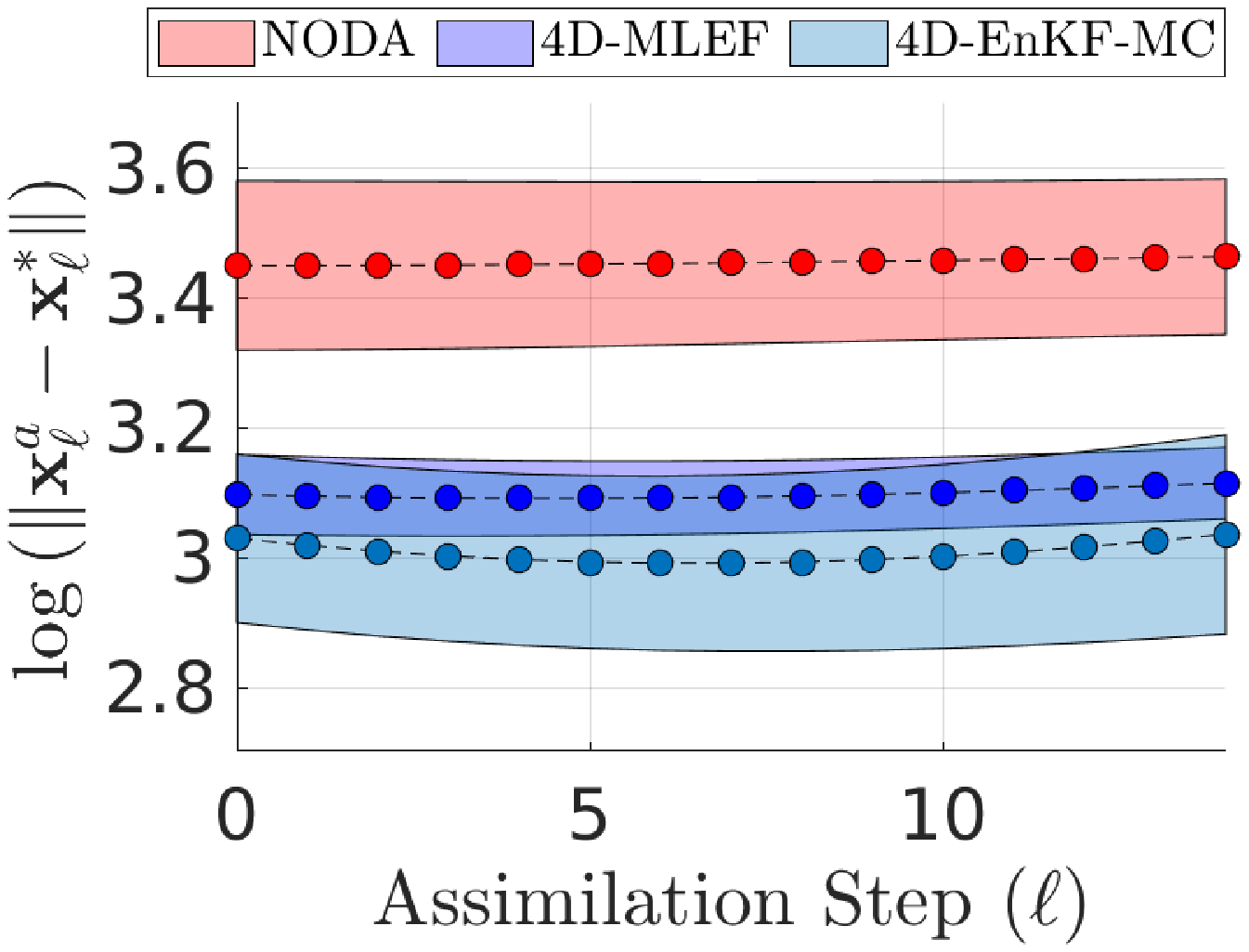}} \\ \hline
   \raisebox{-16mm}{\rotatebox[origin=c]{90}{$\gamma=7$}} &    \raisebox{-\totalheight}{\includegraphics[width=0.40\textwidth]{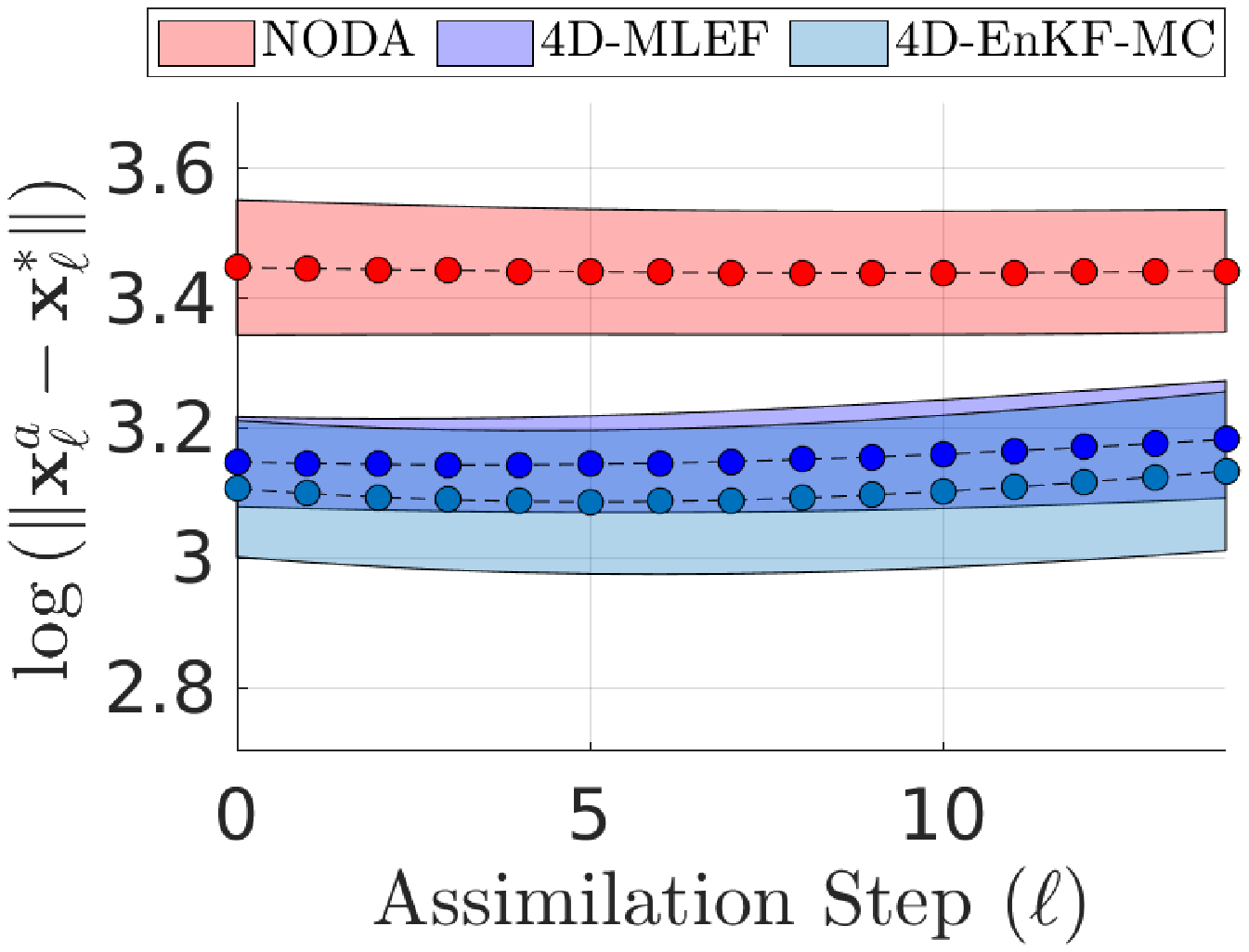}} &    \raisebox{-\totalheight}{\includegraphics[width=0.40\textwidth]{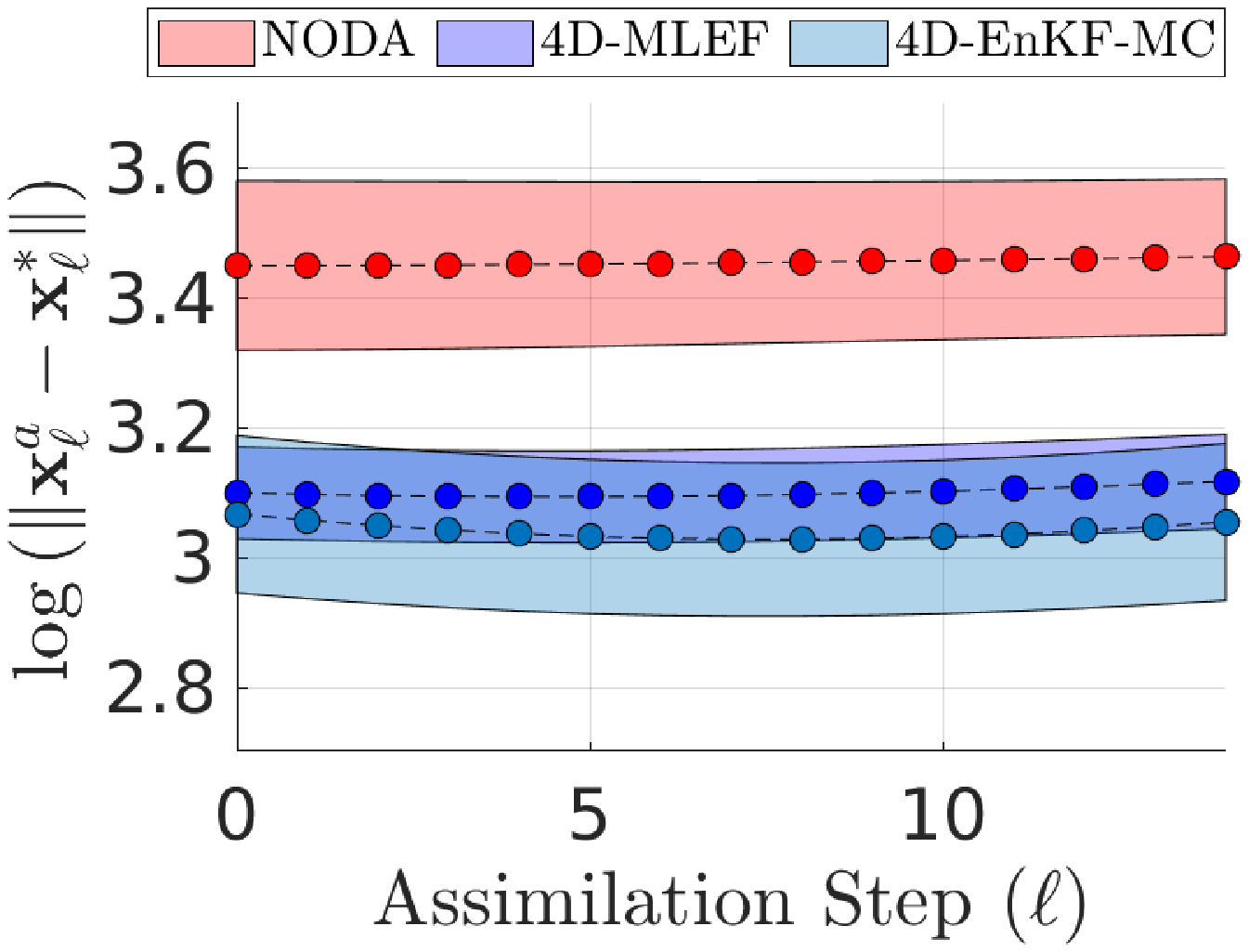}} \\ \hline
\end{tabular}
 \caption{Averages (dashed lines) and standard deviations (shaded regions) of error norms for 30 realizations of experiments and the values of parameters $p = 70\%, \Nens \in \{20,\,60\}$, and $4 \le \gamma \le 7$. The results are shown for the compared filter implementations as well as the NO Data Assimilation (NODA) forecast.}
\label{fig:4-7-70}
\end{figure}
\begin{figure}[H]
\centering
\begin{tabular}{ccc} \\ \hline
  & $\Nens = 20 \%$ & $\Nens = 60 \%$   \\ \hline
 \raisebox{-16mm}{\rotatebox[origin=c]{90}{$\gamma=4$}} &    \raisebox{-\totalheight}{\includegraphics[width=0.40\textwidth]{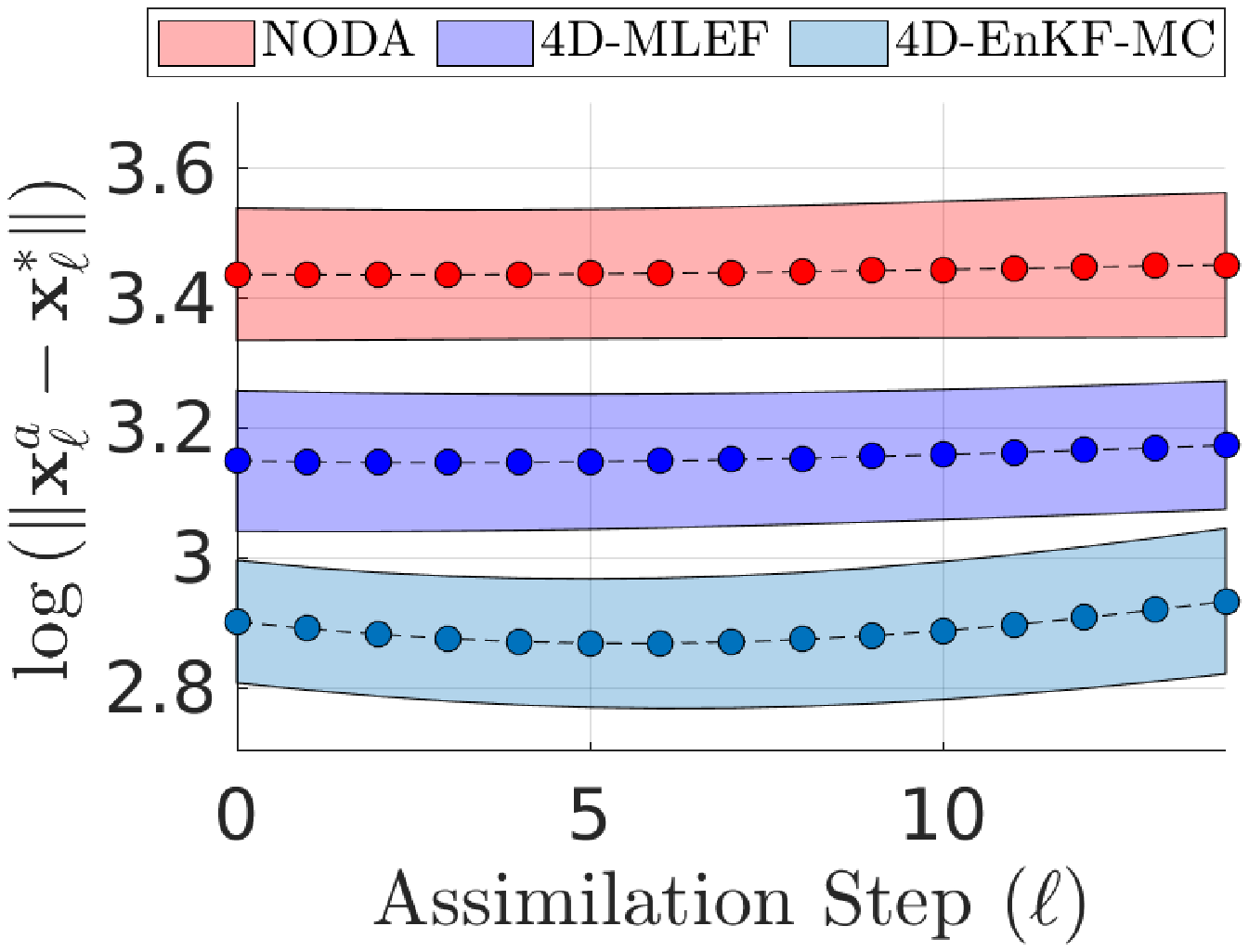}} &    \raisebox{-\totalheight}{\includegraphics[width=0.40\textwidth]{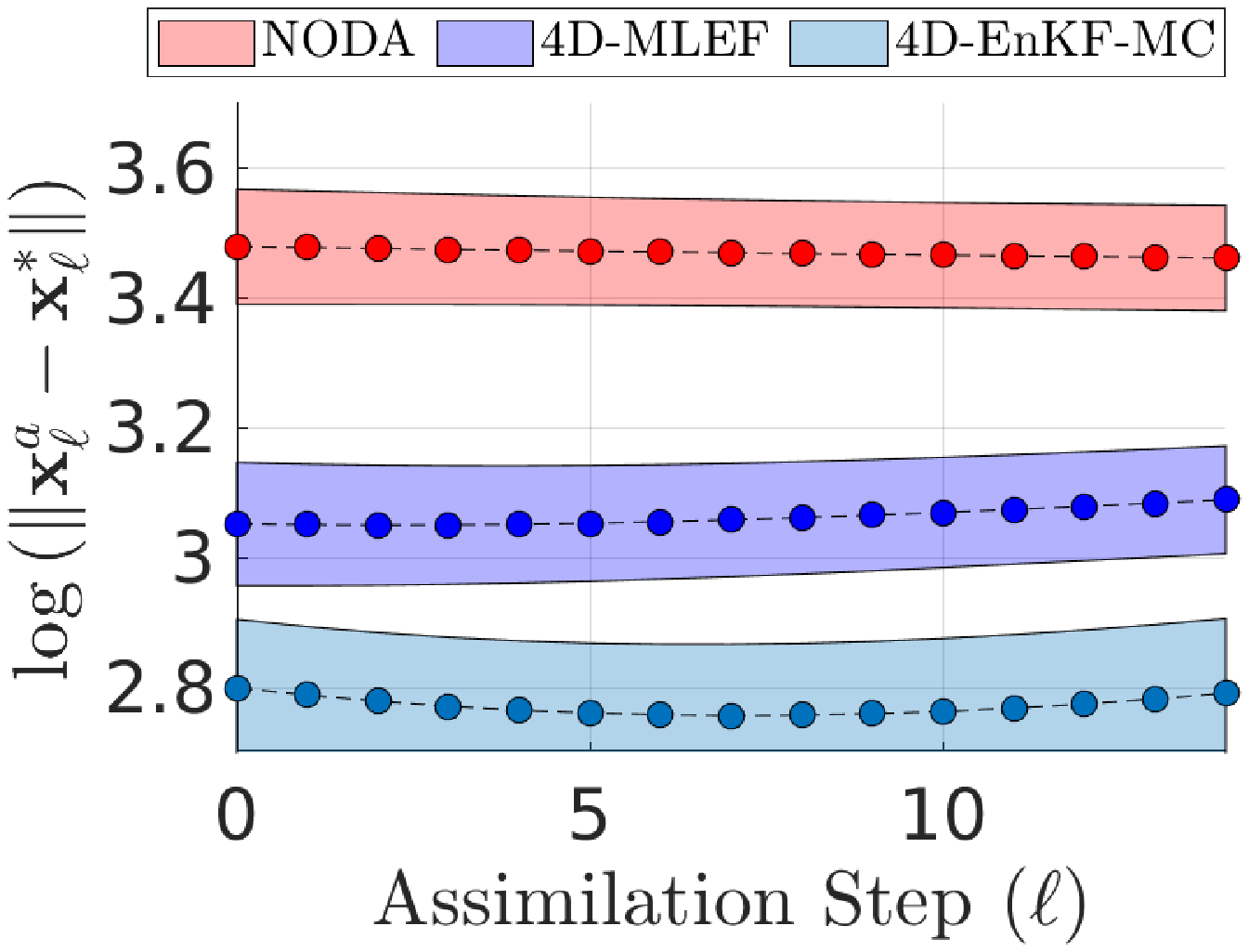}} \\ \hline
 \raisebox{-16mm}{\rotatebox[origin=c]{90}{$\gamma=5$}} &    \raisebox{-\totalheight}{\includegraphics[width=0.40\textwidth]{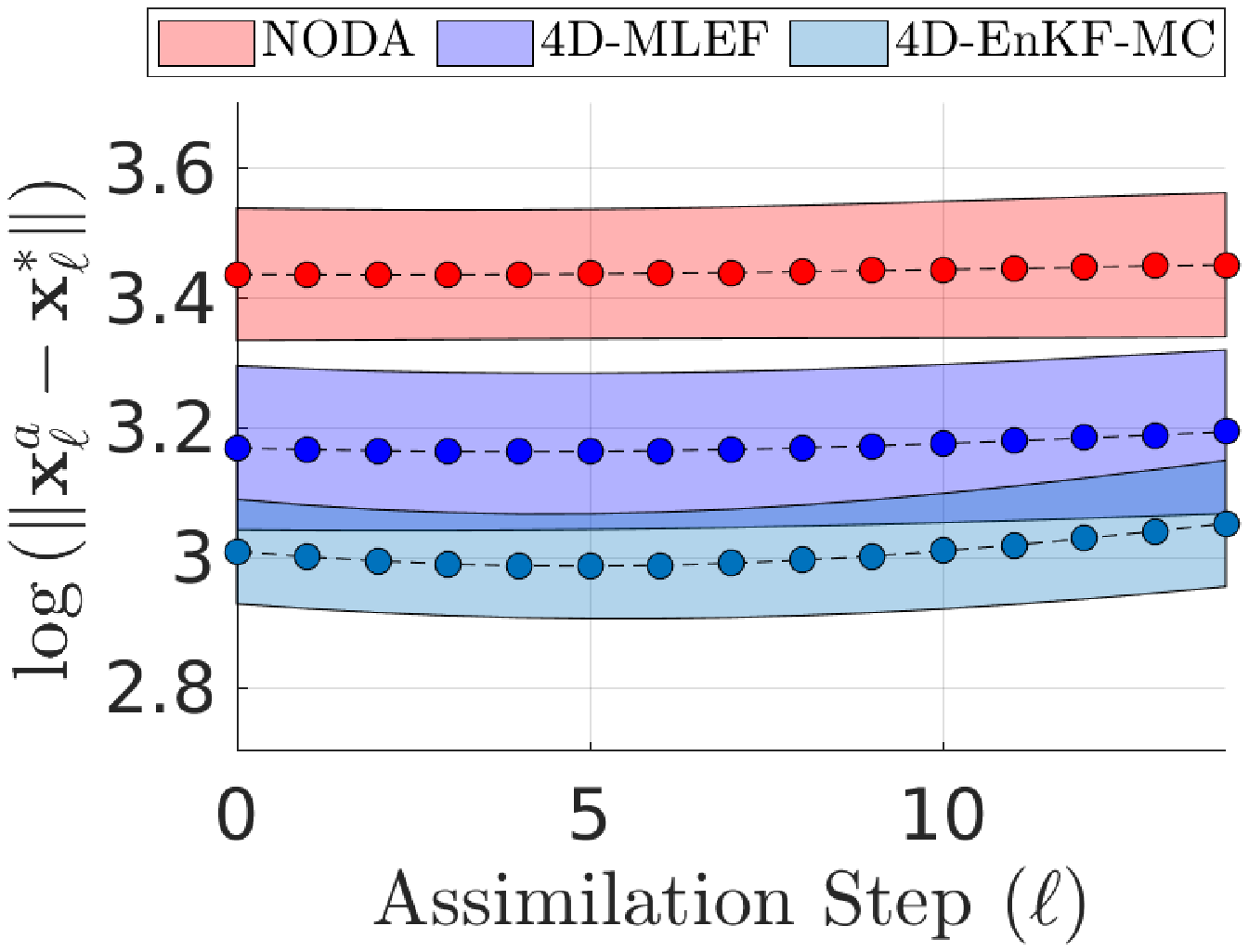}} &    \raisebox{-\totalheight}{\includegraphics[width=0.40\textwidth]{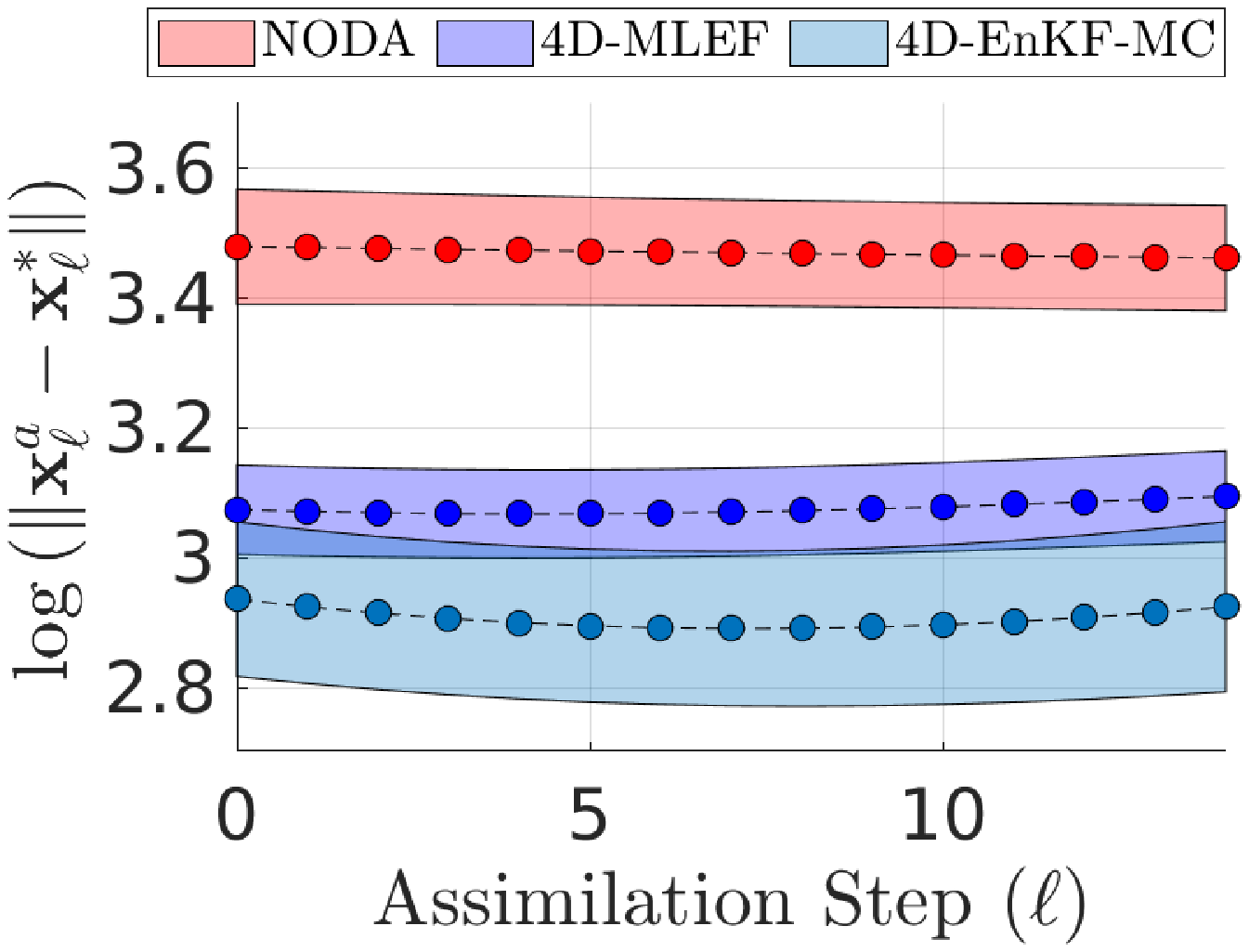}} \\ \hline
  \raisebox{-16mm}{\rotatebox[origin=c]{90}{$\gamma=6$}} &    \raisebox{-\totalheight}{\includegraphics[width=0.40\textwidth]{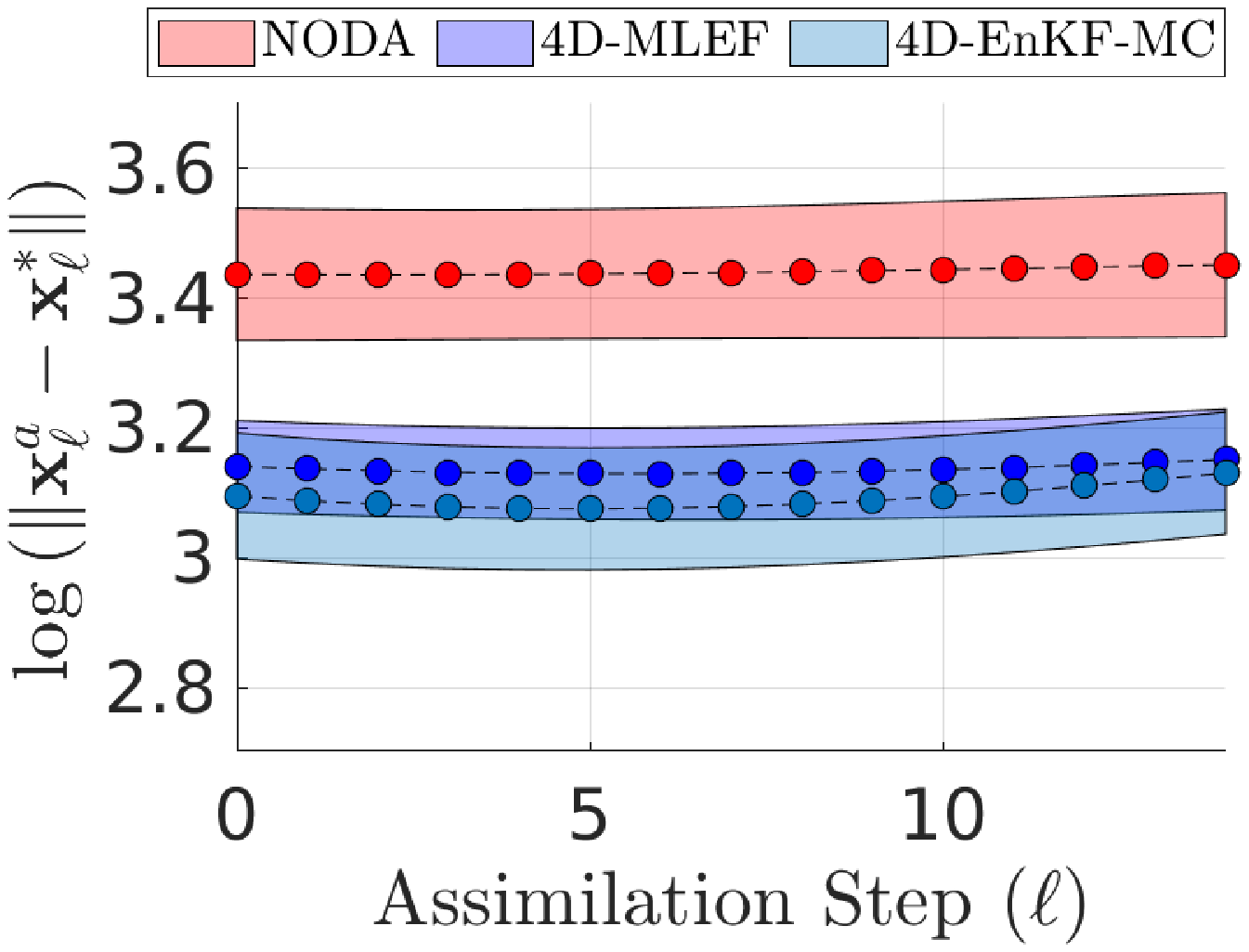}} &    \raisebox{-\totalheight}{\includegraphics[width=0.40\textwidth]{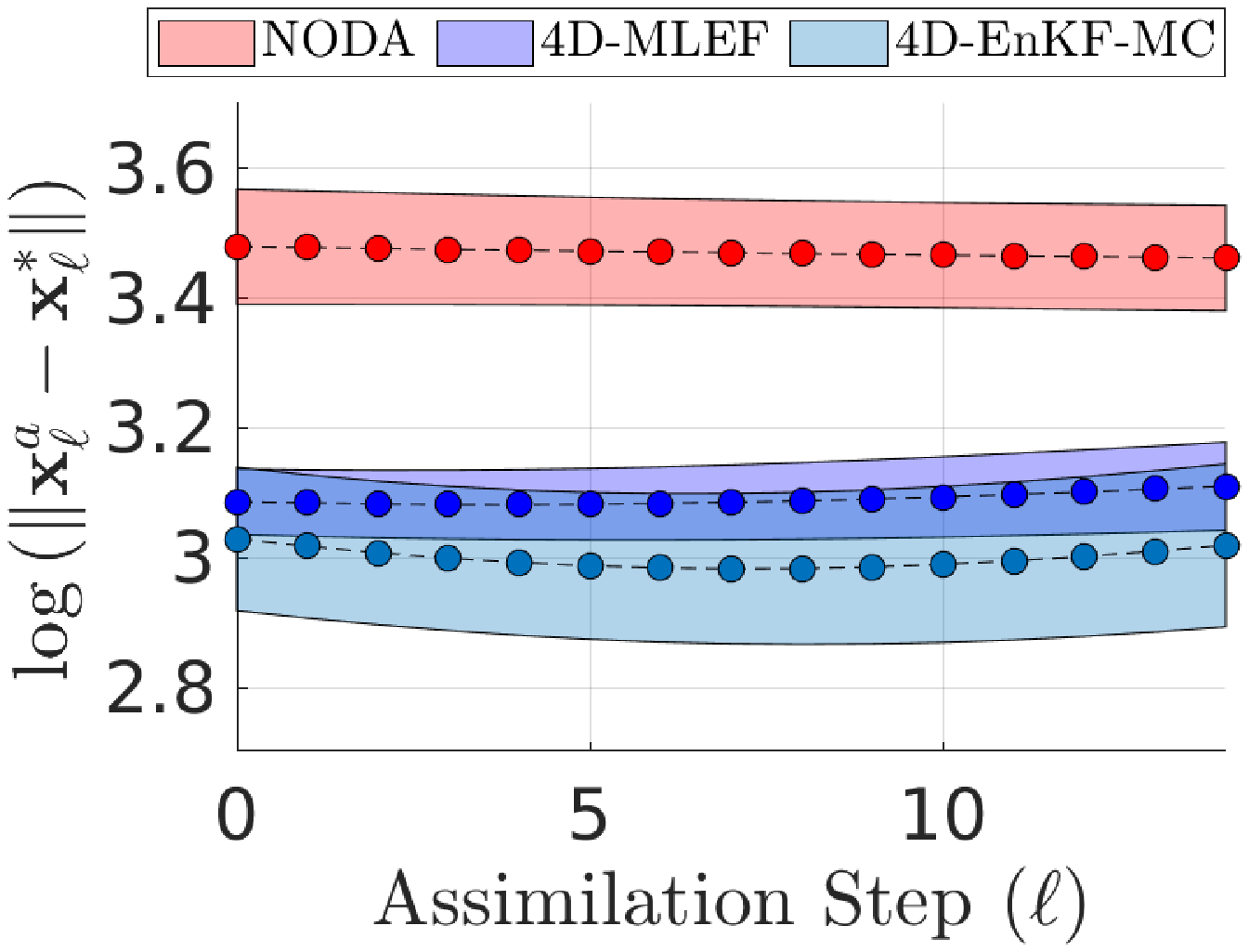}} \\ \hline
   \raisebox{-16mm}{\rotatebox[origin=c]{90}{$\gamma=7$}} &    \raisebox{-\totalheight}{\includegraphics[width=0.40\textwidth]{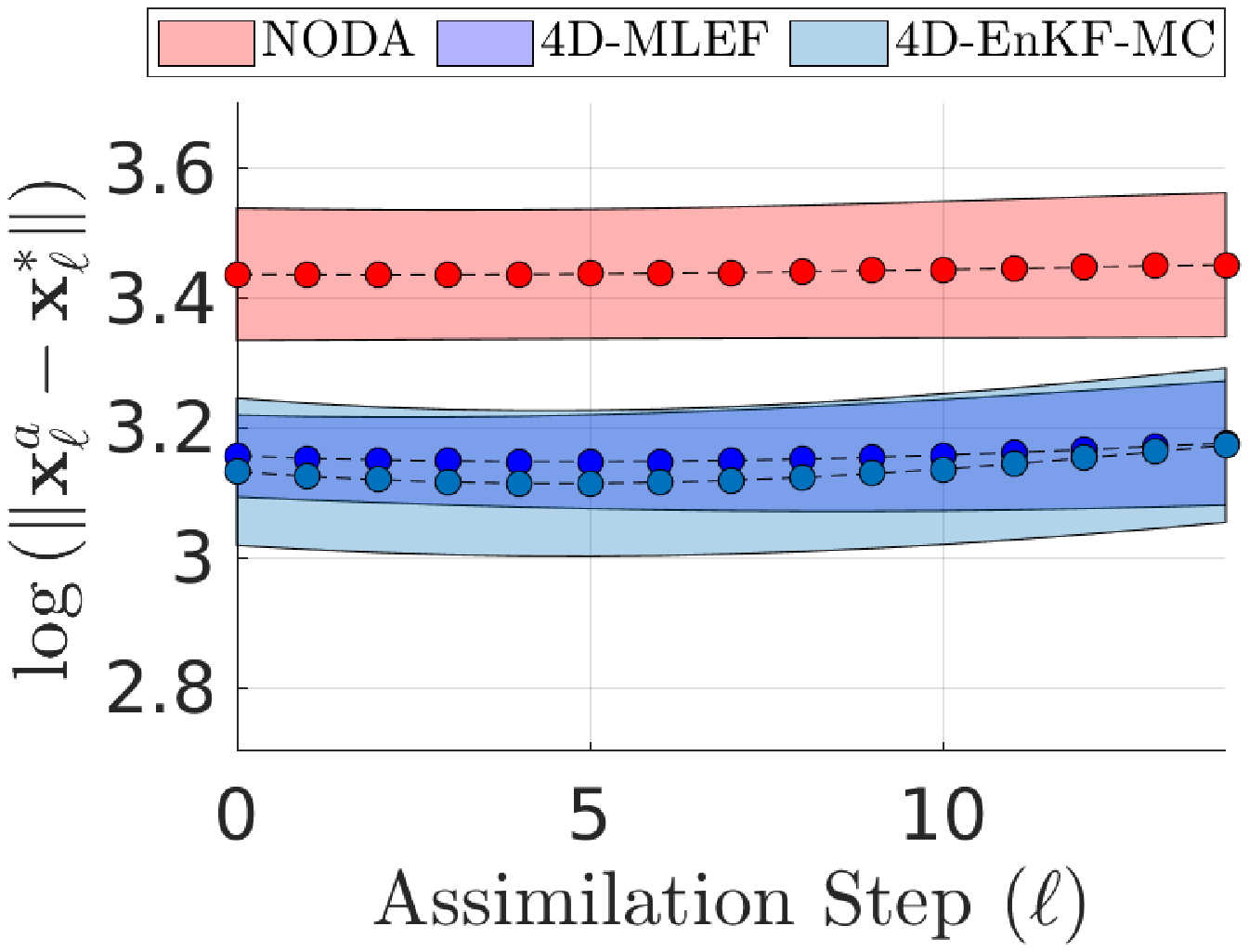}} &    \raisebox{-\totalheight}{\includegraphics[width=0.40\textwidth]{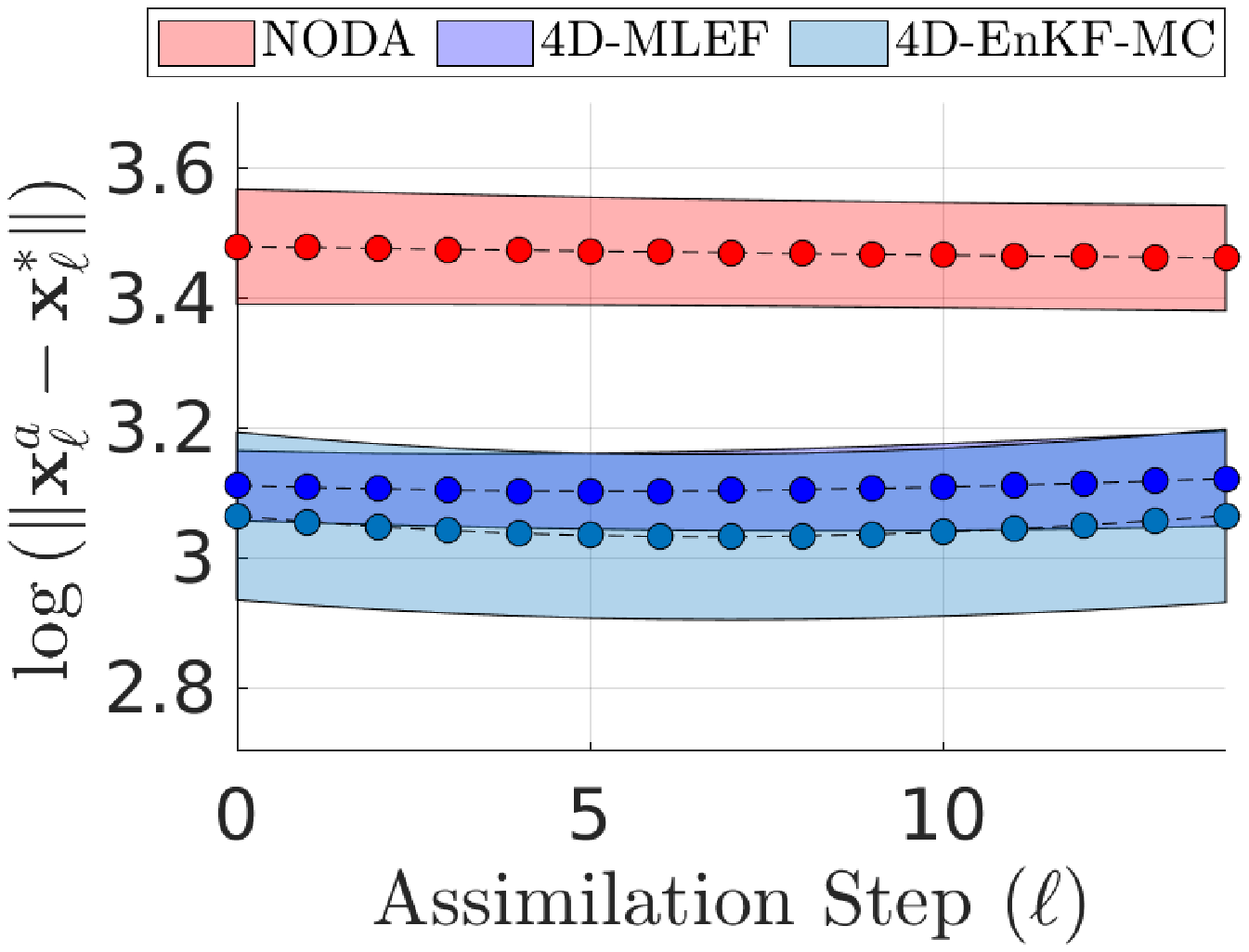}} \\ \hline
\end{tabular}
 \caption{Averages (dashed lines) and standard deviations (shaded regions) of error norms for 30 realizations of experiments and the values of parameters $p = 100\%, \Nens \in \{20,\,60\}$, and $4 \le \gamma \le 7$.}
\label{fig:4-7-100}
\end{figure}
\begin{table}[H]
\centering
\begin{tabular}{|c|c|c|c|c|c|} \hline
$\gamma$ & $\Nens$ & $\ra$ & 4D-EnKF-MC & 4D-MLEF & NODA \\ \hline
\multirow{6}{*}{1} & \multirow{3}{*}{20} &    2 & 0.158 & 22.397 & \multirow{6}{*}{31.328}  \\ \cline{3-5} & & 6 & 0.549 & 22.967 &   \\ \cline{3-5} & & 18 & 18.233 & 19.743 &   \\ \cline{2-5}& \multirow{3}{*}{60} &    2 & 0.144 & 22.206 &   \\ \cline{3-5} & & 6 & 0.173 & 22.167 &   \\ \cline{3-5} & & 18 & 0.629 & 22.697 &   \\ \cline{2-5} \cline{1-6}\multirow{6}{*}{2} & \multirow{3}{*}{20} &    2 & 0.276 & 22.944 & \multirow{6}{*}{31.387}  \\ \cline{3-5} & & 6 & 2.172 & 26.318 &   \\ \cline{3-5} & & 18 & 18.725 & 19.968 &   \\ \cline{2-5}& \multirow{3}{*}{60} &    2 & 0.148 & 22.609 &   \\ \cline{3-5} & & 6 & 0.234 & 23.261 &   \\ \cline{3-5} & & 18 & 1.665 & 25.636 &   \\ \cline{2-5} \cline{1-6}\multirow{6}{*}{3} & \multirow{3}{*}{20} &    2 & 11.230 & 23.642 & \multirow{6}{*}{31.387}  \\ \cline{3-5} & & 6 & 14.081 & 22.601 &   \\ \cline{3-5} & & 18 & 20.093 & 20.985 &   \\ \cline{2-5}& \multirow{3}{*}{60} &    2 & 7.730 & 22.077 &   \\ \cline{3-5} & & 6 & 10.534 & 21.661 &   \\ \cline{3-5} & & 18 & 12.979 & 21.499 &   \\ \cline{2-5} \cline{1-6}\multirow{6}{*}{5} & \multirow{3}{*}{20} &    2 & 20.736 & 24.138 & \multirow{6}{*}{31.387}  \\ \cline{3-5} & & 6 & 20.065 & 24.411 &   \\ \cline{3-5} & & 18 & 20.970 & 22.574 &   \\ \cline{2-5}& \multirow{3}{*}{60} &    2 & 19.225 & 21.930 &   \\ \cline{3-5} & & 6 & 18.550 & 21.596 &   \\ \cline{3-5} & & 18 & 18.736 & 21.772 &   \\ \cline{2-5} \cline{1-6}\multirow{6}{*}{6} & \multirow{3}{*}{20} &    2 & 22.629 & 24.236 & \multirow{6}{*}{31.387}  \\ \cline{3-5} & & 6 & 21.977 & 23.924 &   \\ \cline{3-5} & & 18 & 21.386 & 23.040 &   \\ \cline{2-5}& \multirow{3}{*}{60} &    2 & 20.540 & 22.365 &   \\ \cline{3-5} & & 6 & 20.449 & 22.200 &   \\ \cline{3-5} & & 18 & 20.927 & 21.826 &   \\ \cline{2-5} \cline{1-6}\multirow{6}{*}{7} & \multirow{3}{*}{20} &    2 & 23.209 & 23.837 & \multirow{6}{*}{31.387}  \\ \cline{3-5} & & 6 & 22.368 & 23.527 &   \\ \cline{3-5} & & 18 & 21.650 & 23.476 &   \\ \cline{2-5}& \multirow{3}{*}{60} &    2 & 20.947 & 22.432 &   \\ \cline{3-5} & & 6 & 21.088 & 22.275 &   \\ \cline{3-5} & & 18 & 22.512 & 22.551 &   \\ \cline{2-5} \cline{1-6}
\end{tabular}
 \caption{Averages of RMSE values for 30 realizations of experiments and the parameter values: $p = 70\%, \Nens \in \{20,\,60\}$, and $1 \le \gamma \le 7$.}
\label{tab:rho-0.7}
\end{table}
\begin{table}[H]
\centering
\begin{tabular}{|c|c|c|c|c|c|} \hline
$\gamma$ & $\Nens$ & $\ra$ & 4D-EnKF-MC & 4D-MLEF & NODA \\ \hline
\multirow{6}{*}{1} & \multirow{3}{*}{20} &    2 & 0.143 & 22.396 & \multirow{6}{*}{31.463}  \\ \cline{3-5} & & 6 & 0.520 & 22.948 &   \\ \cline{3-5} & & 18 & 17.963 & 19.680 &   \\ \cline{2-5}& \multirow{3}{*}{60} &    2 & 0.110 & 22.159 &   \\ \cline{3-5} & & 6 & 0.145 & 22.160 &   \\ \cline{3-5} & & 18 & 0.536 & 22.522 &   \\ \cline{2-5} \cline{1-6}\multirow{6}{*}{2} & \multirow{3}{*}{20} &    2 & 0.280 & 22.838 & \multirow{6}{*}{31.404}  \\ \cline{3-5} & & 6 & 1.890 & 24.636 &   \\ \cline{3-5} & & 18 & 18.591 & 19.748 &   \\ \cline{2-5}& \multirow{3}{*}{60} &    2 & 0.131 & 22.349 &   \\ \cline{3-5} & & 6 & 0.202 & 22.691 &   \\ \cline{3-5} & & 18 & 1.476 & 23.618 &   \\ \cline{2-5} \cline{1-6}\multirow{6}{*}{3} & \multirow{3}{*}{20} &    2 & 11.261 & 23.013 & \multirow{6}{*}{31.404}  \\ \cline{3-5} & & 6 & 14.209 & 22.643 &   \\ \cline{3-5} & & 18 & 20.046 & 20.998 &   \\ \cline{2-5}& \multirow{3}{*}{60} &    2 & 8.117 & 22.882 &   \\ \cline{3-5} & & 6 & 10.734 & 21.926 &   \\ \cline{3-5} & & 18 & 12.720 & 21.607 &   \\ \cline{2-5} \cline{1-6}\multirow{6}{*}{5} & \multirow{3}{*}{20} &    2 & 20.463 & 23.668 & \multirow{6}{*}{31.404}  \\ \cline{3-5} & & 6 & 20.310 & 24.062 &   \\ \cline{3-5} & & 18 & 21.019 & 22.643 &   \\ \cline{2-5}& \multirow{3}{*}{60} &    2 & 19.090 & 22.293 &   \\ \cline{3-5} & & 6 & 18.432 & 21.718 &   \\ \cline{3-5} & & 18 & 18.842 & 21.701 &   \\ \cline{2-5} \cline{1-6}\multirow{6}{*}{6} & \multirow{3}{*}{20} &    2 & 22.188 & 23.586 & \multirow{6}{*}{31.404}  \\ \cline{3-5} & & 6 & 22.128 & 23.080 &   \\ \cline{3-5} & & 18 & 21.343 & 23.024 &   \\ \cline{2-5}& \multirow{3}{*}{60} &    2 & 20.197 & 22.196 &   \\ \cline{3-5} & & 6 & 20.216 & 22.026 &   \\ \cline{3-5} & & 18 & 20.955 & 22.054 &   \\ \cline{2-5} \cline{1-6}\multirow{6}{*}{7} & \multirow{3}{*}{20} &    2 & 23.676 & 24.389 & \multirow{6}{*}{31.404}  \\ \cline{3-5} & & 6 & 23.089 & 23.566 &   \\ \cline{3-5} & & 18 & 21.626 & 23.394 &   \\ \cline{2-5}& \multirow{3}{*}{60} &    2 & 21.098 & 22.340 &   \\ \cline{3-5} & & 6 & 21.197 & 22.437 &   \\ \cline{3-5} & & 18 & 21.785 & 22.226 &   \\ \cline{2-5} \cline{1-6}
\end{tabular}
 \caption{Averages of RMSE values for 30 realizations of experiments and the parameter values: $p = 100\%, \Nens \in \{20,\,60\}$, and $1 \le \gamma \le 7$.}
 \label{tab:rho-1}
\end{table}

\section{Conclusions}
\label{sec:conclusions}

We propose two adjoint free 4D-Var methods for non-linear data assimilation. In both methods, control spaces are built to avoid the intrinsic need for adjoint models in the 4D-Var context. The techniques work as follows: based on an ensemble of model realizations, control spaces are built at observation times, then model states are constrained to the space spanned by those control spaces. The analysis increments are computed onto the proposed spaces by linearizing observation operators together with employing line-search optimization methods. The global convergence of the method can be proven as long as the control-space dimension equals that of the model one. We also propose a matrix-free method to compute search directions whose computational effort is linearly bounded by the model dimension; this feature is attractive under data assimilation operational settings. Experimental tests are performed by using the Lorenz-96 model and by employing a non-linear observation operator. The results reveal that both filters can compute initial posterior modes of the error distribution with reasonable accuracy in terms of RMSE values. Even more, the 4D-EnKF-MC can be employed to mitigate the impact of sampling noise as well as to guarantee the convergence of line-search methods during optimization steps.

\section*{Acknowledgments}
This work was supported in part by award UN 2018-38, and by the Applied Math and Computer Science Lab at Universidad del Norte, Colombia .

%
 \section*{Conflict of interest}

 The authors declare that they have no conflict of interest.

\bibliographystyle{spmpsci}      
\bibliography{Main}   


\end{document}